\title{An encoding of array verification problems into array-free Horn clauses\thanks{The research leading to these results has received funding from the \href{http://erc.europa.eu/}{European Research Council} under the European Union's Seventh Framework Programme (FP/2007-2013) / ERC Grant Agreement nr.~306595 \href{http://stator.imag.fr/}{\mbox{``STATOR''}}.}}
\author{David Monniaux\\
{\small Univ. Grenoble Alpes, VERIMAG, F-38000 Grenoble}\\
{\small CNRS, VERIMAG, F-38000 Grenoble, France}
\and Laure Gonnord\\
{\small LIP, Univ. Lyon-1, France}
}
\lstdefinelanguage{imp}[]{C}{morekeywords={assume,assert}}
\lstdefinelanguage{SMT}[]{Lisp}{}
\newcommand{\ve}[1]{\mathbf{#1}}
\newcommand{\vx}{\ve{x}}
\newcommand{\ZZ}{\mathbb{Z}}
\newcommand{\QQ}{\mathbb{Q}}
\newcommand{\NN}{\mathbb{N}}
\newcommand{\arraytype}[2]{\mathit{Array}\left(#1,#2\right)}
\newcommand{\abstr}[1]{#1^\sharp}
\newcommand{\parts}[1]{\mathcal{P}\left(#1\right)}
\newcommand{\abstraction}[2][]{\alpha_{#1}\left(#2\right)}
\newcommand{\concretization}[2][]{\gamma_{#1}\left(#2\right)}
\newcommand{\hash}{\#}
\DeclareMathOperator{\card}{card}
\newcommand{\soft}[1]{\textsc{#1}}
\theoremstyle{definition}
\newtheorem{definition}{Definition}
\newtheorem{algo}{Algorithm}
\theoremstyle{plain}
\newtheorem{theorem}{Theorem}
\newtheorem{lemma}[theorem]{Lemma}
\newtheorem{example}{Example}
\newtheorem*{example*}{Example}
\newtheorem{remark}{Remark}
\newcommand{\rulespacing}{\\[0.4em]}
\begin{document}

\maketitle

\begin{abstract}
Automatically verifying safety properties of programs is hard, and it is even harder if the program acts upon arrays or other forms of maps.
Many approaches exist for verifying programs operating upon Boolean and integer values (e.g. abstract interpretation, counterexample-guided abstraction refinement using interpolants), but transposing them to array properties has been fraught with difficulties.

In contrast to most preceding approaches, we do not introduce a new abstract domain or a new interpolation procedure for arrays. Instead, we generate an abstraction as a scalar problem and feed it to a preexisting solver, with tunable precision.

Our transformed problem is expressed using Horn clauses, a common format with clear and unambiguous logical semantics for verification problems.
An important characteristic of our encoding is that it creates a nonlinear Horn problem, with tree unfoldings, even though following ``flatly'' the control-graph structure ordinarily yields a linear Horn problem, with linear unfoldings.
That is, our encoding cannot be expressed by an encoding into another control-flow graph problem, and truly leverages the capacity of the Horn clause format.

We illustrate our approach with a completely automated proof of the functional correctness of selection sort.
\end{abstract}

\section{Introduction}
Formal program verification, that is, proving that a given program behaves correctly according to specification in all circumstances, is difficult.
Except for very restricted classes of programs and properties, it is an undecidable question.
Yet, a variety of approaches have been developed over the last 40 years for automated or semi-automated verification, some of which have had industrial impact.

In this article, we consider programs operating over arrays, or, more generally, \emph{maps} from an index type to a value type. (in the following, we shall use ``array'' and ``map'' interchangeably).
Such programs contain read (e.g. $v:=a[i]$) and write ($a[i]:=v$) operations over arrays, as well as ``scalar'' operations.%
\footnote{In the following, we shall lump as ``scalar'' operations all operations not involving the array under consideration, e.g. $i:=i+1$. Any data types (integers, strings etc.) are supported provided that they are supported by the back-end solver.}

\paragraph{Universally quantified properties}
Very often, desirable properties over arrays are universally quantified; e.g. sortedness may be expressed as
$\forall k_1,k_2~ k_1 < k_2 \implies a[k_1] \leq a[k_2]$.
However, formulas with universal quantification and linear arithmetic over integers and at least one predicate symbol (a predicate being a function to the Booleans) are so expressive that one can define the execution of a Turing machine as a model to such a formula, whence this class is undecidable \cite{Halpern91}.
Some decidable subclasses have however been identified \cite{BradleyMS06}.

There is therefore no general algorithm for checking that such invariants hold, let alone inferring them. Yet, there have been several approaches proposed to infer such invariants (more on this in Section~\ref{sec:related}).
In this article, we propose a method for inferring such universally quantified invariants, given a specification on the output of the program.
Because of undecidability, this approach may fail to terminate in the general case.

Our approach is based on conversion to Horn clauses, a popular format for program verification problems~\cite{DBLP:conf/vstte/RummerHK13} supported by a number of tools.
Most conversions to Horn clauses map variables and operations from the program to variables of the same type and the same operations in the Horn clause problem:%
\footnote{With the exception of pointers and references, which need special handling and may be internally converted to array accesses.}
an integer is mapped to an integer, an array to an array, etc. If some data types are not supported by the back-end analysis, the variables of these types may be discarded, at the expense of precision --- thus if the back-end analysis does not support arrays, array reads are abstracted as nondeterministic choices, array writes are discarded, and scalar operations are mapped ``as is''.
In contrast, our approach abstracts programs much less violently, with tunable precision, even though the result still is a Horn clause problems without arrays.
Section~\ref{sec:abstraction1} explains how many properties (e.g. initialization) can be proved using one ``distinguished cell'', Section~\ref{sec:sortedness} explains how properties such as sortedness can be proved using two of them; completely discarding arrays corresponds to using zero of them.

We illustrate this approach with an automated proof that the output of \emph{selection sort} is sorted: we apply Section~\ref{sec:sortedness} to obtain a system of Horn clauses without arrays, which we feed to the \soft{Spacer} solver, which produces a model of this system, meaning that the sortedness postcondition truly holds.
Note that \soft{Spacer} cannot, on its own, reason about universal properties on arrays.

Previous approaches \cite{Monniaux_Alberti_SAS2015} using ``distinguished cells'' amounted (even though they were not described as such) to linear Horn rules; on contrast, our abstract semantics uses non-linear Horn rules, which leads to higher precision (Sec.~\ref{sec:abstraction_weak_transfo}).

\paragraph{Multiset of contents}
It is often necessary to reason not only about individual elements of an array or map, but also about its contents as a whole: e.g. sorting algorithms preserve the contents of the array (even though, locally, when moving elements around, they may break this invariant).

The multiset of the contents of an array of elements of type $\beta$ is a map from $\beta$ to $\NN$.
Using that remark, we can abstract the array both using our ``distinguish cell'' approach and as the multiset of its elements (Sec.~\ref{sec:multiset_of_elements}); we provide suitable program transformations.

We illustrate that approach with an automated proof that the output of selection sort has the same contents as its input (that is, the output is a permutation of the input).

\paragraph{Contributions}
Our main contribution is a system of rules for transforming the atomic program statements in a program operating over arrays or maps, as well as (optionally) the universally quantified postcondition to prove, into a system of non-linear Horn clauses over scalar variables only, with tunable precision.
Statements operating over non-arrays variables are mapped (almost) identically to their concrete semantics.
This system over-approximates the behavior of the program.
A solution of that system can be mapped to inductive invariants over the original programs, including universally properties over arrays.

A second contribution, based upon the first, is a system of rules of the same kind that also keeps tracks of array/map contents.

We illustrate both these systems with automated proofs of functional correctness of array initialization, array reversal and selection sort.
For each of these proofs, we simply apply our transformation rules and apply a third-party solver for Horn clauses over scalars.
We also show the user can optionally help the solver converge faster by supplying partial invariants.

A third contribution is a counterexample reconstruction scheme (Sec.~\ref{sec:counterexample}), if the property to prove is actually false.

\section{Program Verification as solving Horn clauses}
A classical approach to program analysis is to consider a program as a control-flow graph and to attach to each vertex $p_i$ (control point) an \emph{inductive invariant} $I_i$: a set of possible values $\vx$ of the program variables (and memory stack and heap, as needed) so that
\begin{inparaenum}[i)]
\item
the set associated to the initial control point $p_{i_0}$ contains the possible initialization values $S_{i_0}$
\item
for each edge $p_i \rightarrow_c p_j$, the set $I_j$ associated to the target control point $p_j$ should include all the states reachable from the states in the set $I_i$ associated to the source control point $p_i$ according to the transition relation $\tau_{i,j}$ of the edge.
\end{inparaenum}
Inductiveness is thus defined by \emph{Horn clauses}:
\begin{eqnarray}
\forall \vx,~ \vx \in S_{i_0} \implies \vx \in I_{i_0} \label{clause:init} \\
\forall \vx,\vx'~ \vx \in I_i \land (\vx,\vx') \in \tau_{i,j}
  \implies \vx' \in I_j \label{clause:inductive}
\end{eqnarray}

For proving safety properties, in addition to inductiveness, one requires that error locations $p_{e_1},\dots,p_{e_n}$ are proved to be unreachable (the associated set of states is empty): this amounts to Horn clauses implying false ($\bot$):
$\forall \vx,~ \vx \in I_{e_i} \implies \bot$.

Various tools can solve such systems of Horn clauses, that is, can synthesize suitable predicates $I_i$, which constitute \emph{inductive invariants}.
In this article, we tried \soft{Z3}%
\footnote{\url{https://github.com/Z3Prover}}
with the \soft{PDR} fixed point solver \cite{DBLP:conf/sat/HoderB12},
Z3 with the \soft{Spacer} solver \cite{DBLP:conf/cav/KomuravelliGCC13,DBLP:conf/cav/KomuravelliGC14},%
\footnote{\url{https://bitbucket.org/spacer/code}}
and \soft{Eldarica}%
\footnote{\url{https://github.com/uuverifiers/eldarica}}
\cite{DBLP:conf/cav/RummerHK13}.%
\footnote{This list is not exhaustive; we apologize to authors of other tools.
Neither did we conduct systematic comparisons between these three tools, which each have numerous configuration options:
for our purposes, it sufficed that at least one concluded within reasonable time that our property was proved.}
Since program verification is undecidable, such tools, in general, may fail to terminate, or may return ``unknown''.

For the sake of simplicity, we shall consider, in this article, that all integer variables in programs are mathematical integers ($\ZZ$) as opposed to machine integers.%
\footnote{%
A classical approach is to add overflow checks to the intermediate representation of programs in order to be able to express their semantics with mathematical integers even though they operate over machine integers.}
In our semantics, we consider that reads from out-of-range locations (array index out of bounds, buffer overflows) stop the execution of the program immediately, but that a write to a location makes this location defined.
Again, it is easy to modify our semantics to include systematic array bound checks, jumps to error conditions, etc.

In examples, instead of writing $I_{\mathit{stmt}}$ for the name of the predicate (inductive invariant) at statement $\mathit{stmt}$, we shall write $\mathit{stmt}$ directly, for readability's sake: thus we write e.g. $\mathit{loop}$ for a predicate at the head of a loop.

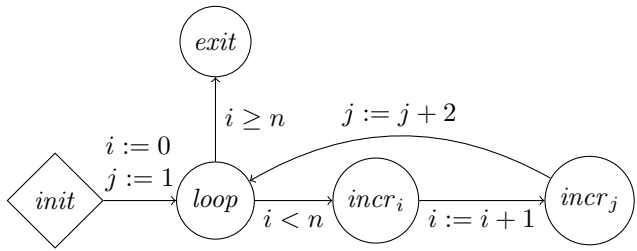
\begin{figure}
\begin{center}
\begin{tikzpicture}[->, node distance=6em]
\node[state, initial by diamond] (init) {$\mathit{init}$};
\node[state] (loop) [right of=init] {$\mathit{loop}$};
\node[state] (incri) [right of=loop] {$\mathit{incr}_i$};
\node[state] (incrj) [node distance=8em, right of=incri] {$\mathit{incr}_j$};
\node[state] (exit) [above of=loop, node distance=6em] {$\mathit{exit}$};
\path (init) edge node[above, align=left] {$i := 0$\\$j := 1$} (loop);
\path (loop) edge node[below] {$i < n$} (incri);
\path (incri) edge node[below] {$i := i+1$} (incrj);
\path (incrj) edge[bend right] node[above] {$j := j+2$} (loop);
\path (loop) edge node[right] {$i \geq n$} (exit);
\end{tikzpicture}
\end{center}

\caption{The control-flow graph for Program~\ref{prog:loop_ij}}
\label{fig:loop_ij}
\end{figure}

\begin{example}\label{ex:loop_ij}
Consider the following program
\lstinputlisting[caption={A simple loop without arrays},label=prog:loop_ij]{examples/loop_ij/loop_ij.c}

Its semantics gets encoded into Horn \emph{rules} as predicates over triples $(n,i,j)$, one predicate for each node of the control-flow graph in Figure~\ref{fig:loop_ij}:
\begin{align}
\forall n \in \ZZ~ \mathit{loop}(n, 0, 1)\\
\forall n,i,j \in \ZZ~
  \mathit{loop}(n, i, j) \land i < n \implies
  \mathit{incr}_i(n, i, j)\\
\forall n,i,j \in \ZZ~
  \mathit{loop}(n, i, j) \land i \geq n \implies
  \mathit{exit}(n, i, j)\\
\forall n,i,j \in \ZZ~
  \mathit{incr}_i(n, i, j) \implies
  \mathit{incr}_j(n, i+1, j)\\
\forall n,i,j \in \ZZ~
  \mathit{incr}_j(n, i, j) \implies
  \mathit{loop}(n, i, j+2)
\end{align}

If we wish to prove that, at the end of the program, $n\geq 0 \implies i=n$, we add the Horn \emph{query}
\begin{align}
\forall n,i,j\in \ZZ~ \mathit{exit}(n,i,j) \land n\geq 0 \implies i=n
\label{formula:query}
\end{align}
\soft{Spacer} and \soft{Z3/PDR} then answer ``satisfiable'' after \emph{synthesizing} suitable predicates $\mathit{loop}$, $\mathit{incr}_i$ etc. satisfying the Horn system --- otherwise said, inductive invariants implying the postcondition~\ref{formula:query}.

If we had made the mistake of forgetting that $i=n$ holds finally only for $n \geq 0$, we would have written the query as
\begin{align}
\forall n,i,j\in \ZZ~ \mathit{exit}(n,i,j) \implies i=n
\end{align}
and these solvers would have answered ``unsatisfiable''.%
\footnote{Horn clause solvers based on counterexample-based refinement are rather good at handling disjunctions (here, $n < 0$ vs $n \geq 0$). Tools based on convex domains, such as polyhedra, may have more difficulties.}

Let us now try proving that $n \geq 0 \Rightarrow j \leq 2+3n$ holds finally:
\begin{align}
\forall n,i,j\in \ZZ~ \mathit{exit}(n,i,j) \land n\geq 0 \implies j \leq 2+3n
\end{align}
While \soft{Spacer} answers instantaneously, \soft{Z3/PDR} seems to enter a neverending sequence of refinements.
\end{example}

For reasons of efficiency of the back-end solver, it may be desirable to have fewer predicates. It is possible to automatically simplify the system of Horn rules by coalescing several rules together: for instance, if we have $\forall x,y,~ I_1(x,y) \Rightarrow I_2(x+1,y)$ and $\forall y,z,~ I_2(x,y) \Rightarrow I_3(x,y+2)$, and $I_2$ does not occur elsewhere, then we can remove $I_2$ and use a single rule $\forall x,y,~ I_1(x,y) \implies I_3(x+1,y+2)$.
Similarly, if in the antecedent of a Horn rule we have an equality $x=e$ where $e$ is an expression and $x$ is universally quantified in the rule, then we can remove this variable and replace it with~$e$ in the rest of the rule.
We shall often apply such syntactic simplifications to make examples shorter and more readable.

The \emph{flat} encoding of the program described by initialization (Formulas~\ref{clause:init}) and inductiveness (Formulas~\ref{clause:inductive}), following the control-flow graph (CFG), results in a \emph{linear} system of Horn clauses, in the sense that if one ``unfolds'' the system by repeatedly rewriting the right-hand sides of implications $l_1 \land \dots \land l_n \implies r$ into their left-hand side in the rest of the Horn clauses, one gets a list, not a tree structure.

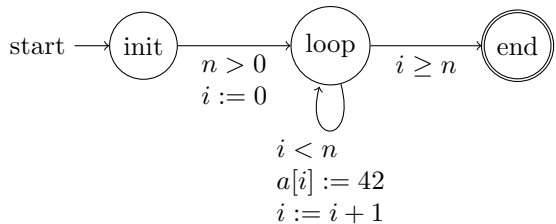
\begin{figure}
\begin{center}
\begin{tikzpicture}[->, node distance=7em]
\node[state,initial] (init) {init};
\node[state] (loop) [right of=init] {loop};
\node[state,accepting] (end) [right of=loop] {end};
\path (init) edge node[below, align=left] {$n > 0$\\$i := 0$} (loop);
\path (loop) edge [loop below] node[align=left] {
  $i < n$ \\ $a[i]:=42$ \\ $i:=i+1$} (loop);
\path (loop) edge node[below] {$i \geq n$} (end);
\end{tikzpicture}
\end{center}
\caption{Compact control-flow graph for Program~\ref{prog:array_fill1}}
\label{fig:array_fill1_cfg}
\end{figure}

\begin{example}\label{ex:array_fill1}
Consider the program:
\lstinputlisting[label=prog:array_fill1,caption={1D array fill}]{examples/array_fill1/array_fill1.c}
We would like to prove that this program truly fills array \lstinline|a[]| with value 42. The flat encoding into Horn clauses assigns a predicate (set of states) to each of the control nodes (Fig.~\ref{fig:array_fill1_cfg}), and turns each transition into a Horn rule:
\begin{align}
\begin{aligned}
\forall n\in\ZZ~ \forall a\in\arraytype{\ZZ}{\ZZ}~ n>0 \implies
  \mathit{loop}(n,0,a)
\end{aligned} \label{equ:array_fill1_arrays_begin}\\
\begin{aligned}
\forall n,i\in\ZZ~ \forall a\in \arraytype{\ZZ}{\ZZ}~
  i<n \land \mathit{loop}(n,i,a) \\
  \implies \mathit{loop}(n,i+1,\mathit{store}(a,i,42))
\end{aligned}\rulespacing
\begin{aligned}
\forall n,i\in\ZZ~ \forall a\in \arraytype{\ZZ}{\ZZ}~
  i\geq n \land \mathit{loop}(n,i,a)\\
  \implies \mathit{end}(n,a)
\end{aligned}\rulespacing
\begin{aligned}
\forall n\in\ZZ~ \forall a\in \arraytype{\ZZ}{\ZZ}~
  0 \leq x < n \land \mathit{end}(n,a) \\
  \implies a[x] = 42
\end{aligned} \label{equ:array_fill1_arrays_end}
\end{align}
where $\mathit{store}(a,i,v)$ is array $a$ where the value at index $i$ has been replaced by~$v$.
\end{example}

None of the tools we have tried (\soft{Z3/PDR}, \soft{Spacer}, \soft{Eldarica}) has been able to solve this system, presumably because they cannot infer universally quantified invariants over arrays.
Indeed, here the invariant needed in the loop is
\begin{equation}
0 \leq i \leq n \land (\forall k~ 0 \leq k < i \implies a[k] = 42)
\end{equation}
While $0 \leq i \leq n$ is inferred by a variety of approaches, the rest of the formula is a tougher problem.

Most software model checkers attempt constructing invariants from \emph{Craig interpolants} obtained from refutations of the accessibility of error states in partial unfoldings of the problem, but interpolation over array properties is difficult,
especially since the goal is not to provide any interpolant, but interpolants that generalize well to invariants \cite{Alberti_Monniaux_SAC-SVT2015,AlbertiBGRS14}.

This article instead introduces a way to derive universally quantified invariants from the analysis of a system of Horn clauses on scalar variables (without array variables).

The flat encoding is not the only possible one. One may for instance instead choose to find invariants not as sets of states (\emph{unary} predicates on states), but as \emph{binary} relations on states: a procedure or function, or in fact any part of the program with one single entry and one single exit point (e.g. a loop bosdy with no break statement) is represented by a set of input-output pairs.
In general, e.g. when a procedure encoded in this way calls itself twice in a row, the resulting system of Horn clauses is \emph{nonlinear}: unfolding the Horn clauses may lead to an exponentially growing tree~\cite{DBLP:conf/vstte/RummerHK13} (see Fig.~\ref{fig:counterexample_unfolding} for an example of a tree unfolding of a nonlinear system).
This is one reason why the Horn format for program verification is richer and more flexible than a mere CFG.
%\footnote{Members of an Oxford verification group were seen sporting T-shirts proclaiming ``NO CFG'' at various conferences, in support for the assertion that program analysis on Horn clauses is better than on CFGs.}
In this article, we are going to exploit nonlinear systems of Horn clauses even if encoding a CFG ``flatly''.

\section{Getting rid of the arrays}
\label{sec:abstraction1}
To use the power of Horn solver on array-free problems, we soundly abstract problems with arrays to problems without arrays.

In the Horn clauses for example~\ref{ex:array_fill1}, we attached to each program point $p_k$ a predicate $I_k$ over, say, $\ZZ \times \ZZ \times \arraytype{\ZZ}{\ZZ}$ when the program variables are two integers $i,n$ and one integer-value, integer-indexed array $a$. In any solution of the system of clauses, $\neg I_k(i,n,a)$ implies that $i,n,a$ cannot be reached at program point $p_k$. Instead, we will consider a predicate $\abstr{I}_k$ over $\ZZ \times \ZZ \times \ZZ \times \ZZ$ such that $\neg \abstr{I}_k(i,n,k,a_k)$ implies that there is at $p_k$ no reachable state $(i,n,a)$ such that $a[x]=a_k$. We thus have to provide abstract transformers for each statement.

Without loss of generality, any statement in the program can be assumed to be either
\begin{enumerate}[i)]
\item an array read to a fresh variable,
  \lstinline|v=a[i];| in C syntax, $v:=a[i]$ in pseudo-code;
  the variables of the program are $(\vx,i)$ before the statement
  and $(\vx,i,v)$ after the statement, where $\vx$ is a vector of arbitrarily
  many variables;
\item an array write, \lstinline|a[i]=v;| (where \lstinline|v| and \lstinline|i| are variables) in C syntax,  $a[i]:=v$ in pseudo-code;
  the variables of the program are $(\vx,i,v)$ before and after the statement;
\item a scalar operation, including assignments and guards over scalar variables.
\end{enumerate}
More complex statements can be transformed to a sequence of such statements, by introducing temporary variables if needed: for instance, $a[i]:=a[j]$ is transformed into $\mathit{temp}:=a[j];~ a[i]:=\mathit{temp}$.

\begin{definition}[Read statement]\label{def:read1}
Let \lstinline|v| be a variable of type $\beta$,
\lstinline|i| be a variable of type $\iota$,
and \lstinline|a| be an array of values of type $\beta$ with an index of type $\iota$.
Let $\vx$ be the other program variables, taken in~$\chi$.
The concrete ``next state'' relation for the read statement \lstinline|v=a[i];|
is $(\vx,i,a) \rightarrow_c (\vx,i,a[i],a)$.

Its forward abstract semantics is encoded into two Horn clauses, assuming the statement is between locations $p_1$ and $p_2$:
\begin{align}\label{rule:read1_different}
\begin{aligned}
\forall \vx \in \chi~ \forall i \in \iota~ \forall a_i \in \beta~
  \forall k \in \iota~ \forall a_k \in \beta\\
  k \neq i \land \abstr{I}_1\big((\vx,i),(k,a_k)\big) \land
                 \abstr{I}_1\big((\vx,i),(i,a_i)\big) \\ \implies
  \abstr{I}_2\big((\vx,a_i,i),(k,a_k)\big)
\end{aligned}\rulespacing\label{rule:read1_same}
\begin{aligned}
\forall \vx \in \chi~ \forall i \in \iota~ \forall a_i \in \beta~
  \forall k \in \iota~ \forall a_k \in \beta\\
  \abstr{I}_1\big((\vx,i),(i,a_i)\big) \implies
  \abstr{I}_2\big((\vx,a_i,i),(i,a_i)\big)
\end{aligned}
\end{align}
\end{definition}

While rule~\ref{rule:read1_same} is straightforward, the nonlinear rule~\ref{rule:read1_different} may be more difficult to comprehend. The intuition is that, to have $a_i = a[i]$ and $a_k = a[k]$ at the read instruction with a given valuation $(\vx,i)$ of the other variables, both $a_i = a[i]$ and $a_k = a[k]$ had to be reachable with the same valuation.

\begin{remark}
One weakens the semantics by replacing these two rules by a single Rule~\ref{rule:read1_different} without the $i \neq k$ guard. Rule \ref{rule:read1_same} ensures that in the outcome, if $i = k$ then $v = a_k$.
\end{remark}

\begin{definition}[Write statement]\label{def:write1}
With the same notations as above. 
The concrete ``next state'' relation for the write statement \lstinline|a[i]=v;| is
$(\vx,i,v,a) \rightarrow_c (\vx,i,v,\mathit{store}(a,i,v))$.

Its forward abstract semantics is encoded into two Horn clauses:
\begin{align}
\begin{aligned}
\forall \vx \in \chi~ \forall i \in \iota~ \forall v \in \beta~
  \forall k \in \iota~ \forall a_k \in \beta\\
  \abstr{I}_1\big((\vx,i,v),(k,a_k)\big) \land i\neq k \implies
  \abstr{I}_2\big((\vx,v,i),(k,a_k)\big)
\end{aligned}\label{rule:write1_different}\rulespacing
\begin{aligned}
\forall \vx \in \chi~ \forall i \in \iota~ \forall v \in \beta~
  \forall k \in \iota~ \forall a_k \in \beta\\
  \abstr{I}_1\big((\vx,i,v),(i,a_k)\big) \implies
  \abstr{I}_2\big((\vx,v,i),(i,v)\big)
\end{aligned}\label{rule:write1_same}
\end{align}
\end{definition}

\begin{definition}[Initialization]\label{def:initialization1}
Creating an array variable with nondeterministically chosen initial content is abstracted by
\begin{align}
\begin{aligned}
\forall \vx \in \chi~ \forall k \in \iota~ \forall a_k \in \beta~
  \abstr{I}_1(\vx) \implies
  \abstr{I}_2(\vx,k,a_k)
\end{aligned}
\end{align}

In particular, creating an array variable indexed by $0 \dots n-1$ is abstracted by:
\begin{align}
\begin{aligned}
\forall \vx \in \chi~ \forall k \in \ZZ~ \forall a_k \in \beta~
  \abstr{I}_1(\vx) \land 0 \leq k < n \\ \implies
  \abstr{I}_2(\vx,k,a_k)
\end{aligned}
\end{align}
\end{definition}

\begin{remark}
Because the $0 \leq k <n$ condition gets naturally propagated throughout the rules (it holds at the initialization state and can be assumed to hold at other states), in our examples, we shall often omit this condition from the other rules, for the sake of brevity, and simply write $k \in \ZZ$.
\end{remark}

\begin{definition}[Scalar statements]\label{def:scalar1}
With the same notations as above, we consider a statement (or sequence thereof) operating only on scalar variables: $\vx \rightarrow_s \vx'$ if it is possible to obtain scalar values $\vx'$ after executing the statement on scalar values $\vx$.
The concrete ``next state'' relation for that statement is
$(\vx,i,v,a) \rightarrow_c (\vx',i,v,a)$.

Its forward abstract semantics is encoded into one Horn clause:
\begin{align}
\begin{aligned}
\forall \vx \in \chi~ \forall k \in \iota~ \forall a_k \in \beta\\
  \abstr{I}_1(\vx,k,a_k) \land \vx \rightarrow_s \vx'
  \implies \abstr{I}_2(\vx',k,a_k)
\end{aligned}
\end{align}
\end{definition}

\begin{example}
A test $x \neq y$ gets abstracted as
\begin{equation}
\forall x,y,k,a_k~ \abstr{I}_1(x,y,k,a_k) \land x \neq y \implies
                    \abstr{I}_2(x,y,k,a_k)
\end{equation}
\end{example}

\begin{definition}\label{def:kill}
The scalar operation $\mathit{kill}(v_1,\dots,v_n)$ removes variables $v_1,\dots,v_n$: $(\vx,v_1,\dots,v_n) \rightarrow \vx$.
\end{definition}

We shall apply it to get rid of dead variables, sometimes, for the sake of brevity, without explicit note, by coalescing it with other operations.

We use the same Galois connection \cite{DBLP:journals/logcom/CousotC92} as some earlier works \cite{Monniaux_Alberti_SAS2015} \cite[Sec.~2.1]{DBLP:conf/iccl/CousotC94}:
\begin{definition}\label{def:Galois1}
The \emph{concretization} of $\abstr{I} \subseteq \chi \times (\iota \times \beta)$ is
\begin{equation}
\concretization{\abstr{I}} = \{ (\vx,a) \mid
  \forall i\in\iota~ (\vx,i,a[i]) \in \abstr{I} \}
\end{equation}
The \emph{abstraction} of $I \subseteq \chi \times \arraytype{\iota}{\beta}$ is
\begin{equation}
\abstraction{I} = \{ (\vx,i,a[i]) \mid x \in \chi, i \in \iota \}
\end{equation}
\end{definition}

\begin{theorem}
$\alpha$ and $\gamma$ form a Galois connection
\begin{equation*}
 \parts{\chi \times \arraytype{\iota}{\beta}}
 \galois{\alpha}{\gamma} \parts{\chi \times (\iota \times \beta)}.
\end{equation*}
\end{theorem}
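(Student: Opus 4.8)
The plan is to verify the defining adjunction property of a Galois connection directly from the two set-builder definitions, rather than checking monotonicity together with the extensive and reductive laws separately. Since both lattices here are powersets ordered by inclusion, what I must establish is that for every $I \subseteq \chi \times \arraytype{\iota}{\beta}$ and every $\abstr{I} \subseteq \chi \times (\iota \times \beta)$,
\[
\abstraction{I} \subseteq \abstr{I} \iff I \subseteq \concretization{\abstr{I}}.
\]
Once this equivalence is in hand, the Galois connection follows by the standard characterization (namely the adjunction definition used throughout Cousot--Cousot abstract interpretation, cited in the excerpt).

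For the forward implication I would assume $\abstraction{I} \subseteq \abstr{I}$, take an arbitrary $(\vx,a) \in I$, and show $(\vx,a) \in \concretization{\abstr{I}}$; unfolding $\gamma$, this requires $(\vx,i,a[i]) \in \abstr{I}$ for every $i \in \iota$, which holds because each such triple lies in $\abstraction{I}$ by definition of $\alpha$ and $\abstraction{I} \subseteq \abstr{I}$ by hypothesis. For the converse I would assume $I \subseteq \concretization{\abstr{I}}$ and take an arbitrary element of $\abstraction{I}$, which by definition has the form $(\vx,i,a[i])$ for some $(\vx,a) \in I$ and $i \in \iota$; then $(\vx,a) \in \concretization{\abstr{I}}$ gives $(\vx,j,a[j]) \in \abstr{I}$ for all $j$, and instantiating at $j=i$ yields $(\vx,i,a[i]) \in \abstr{I}$. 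Both directions are thus pure quantifier manipulation.

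The one point requiring care --- and essentially the only place the statement could go wrong --- is the reading of the definition of $\alpha$: the set $\{(\vx,i,a[i]) \mid \dots\}$ must be understood with $a$ bound by the condition $(\vx,a) \in I$ rather than ranging freely, since otherwise $\alpha$ would forget $I$ entirely. I would state this interpretation explicitly at the outset. A second, more foundational prerequisite is that the arrays in $\arraytype{\iota}{\beta}$ are treated as total maps $\iota \to \beta$, so that $a[i]$ is defined at every index; this is what makes both $\alpha$ and the universally quantified condition inside $\gamma$ well defined, and it is exactly the feature the backward direction relies on when it instantiates the ``for all $j$'' clause at the specific index $i$. If one instead prefers the monotonicity-plus-laws route, these same two observations make $\alpha \circ \gamma \subseteq \mathrm{id}$ and $\mathrm{id} \subseteq \gamma \circ \alpha$ immediate, but the single adjunction equivalence above is the most economical path.
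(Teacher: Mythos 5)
Your proof is correct. Note, however, that the paper itself gives no proof of this theorem: it is stated bare, with the Galois connection simply borrowed from earlier work (the cited Cousot--Cousot papers and \citet{Monniaux_Alberti_SAS2015}), so there is no in-paper argument to compare yours against. Your route --- verifying the single adjunction equivalence $\abstraction{I} \subseteq \abstr{I} \iff I \subseteq \concretization{\abstr{I}}$ on the two powerset lattices --- is the standard and most economical way to discharge the claim, and it is complete as it stands, since on posets the adjunction by itself yields monotonicity of both maps and the laws $\alpha \circ \gamma \subseteq \mathrm{id}$ and $\mathrm{id} \subseteq \gamma \circ \alpha$. Your two preliminary observations are also exactly the right ones to make explicit: the printed definition of $\abstraction{I}$ (with ``$x \in \chi$'' ranging freely) is indeed a slip that must be read as quantifying over $(\vx,a) \in I$, since otherwise $\alpha$ would be constant in $I$ and could not be a lower adjoint; and the instantiation of the universal quantifier at $j=i$ in the backward direction genuinely requires the elements of $\arraytype{\iota}{\beta}$ to be total maps, which is how the paper uses them throughout.
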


Our Horn rules are of the form $\forall \ve{y}~ \abstr{I_1}(\ve{f_1}(\ve{y})) \land \dots \land \abstr{I_1}(\ve{f_m}(\ve{y})) \land P(\ve{y}) \implies \abstr{I}_2(\ve{g}(\ve{y})$ ($\ve{y}$ is a vector of variables, $\ve{f_1},\dots,\ve{f_m}$ vectors of terms depending on $\ve{y}$, $P$ an arithmetic predicate over $\ve{y}$).
In other words, they impose in $\abstr{I}_2$ the presence of $\ve{g}(\ve{y})$ as soon as certain elements $\ve{f_1}(\ve{y}),\dots,\ve{f_m}(\ve{y})$ are found in~$\abstr{I_1}$.
Let $\abstr{I}_{2-}$ be the set of such imposed elements.
This Horn rule is said to be \emph{sound} if $\gamma(\abstr{I}_{2-})$ includes all states $(\vx',a')$ such that there exists $(\vx,a)$ in $\gamma(\abstr{I}_1)$ and $(\vx,a) \rightarrow_c (\vx',a')$.

%\begin{remark}
%According to that theorem, a solution to the Horn clause problems maps back to %inductive invariants $\gamma(\abstr{I}_1),\dots,\gamma(\abstr{I}_m)$ on the concrete problems.
%If the solver prints these invariants as first-order formulas, then $\gamma(\abstr{I}_1),\dots,\gamma(\abstr{I}_m)$ can as well be obtained as first-order formulas, which could be of interest to the end user.
%\end{remark}

\begin{lemma}
The forward abstract semantics of the read statement (Def.~\ref{def:read1}) is sound.
\end{lemma}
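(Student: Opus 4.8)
The plan is to unfold the soundness criterion stated in the paragraph just before the lemma and to verify it directly for the two Horn rules~\ref{rule:read1_different} and~\ref{rule:read1_same}. Concretely, I fix an arbitrary abstract value $\abstr{I}_1$, let $\abstr{I}_{2-}$ be the set of triples these two rules impose in $\abstr{I}_2$, and take an arbitrary concrete post-state $(\vx',a')$ for which there is a pre-state $(s,a) \in \concretization{\abstr{I}_1}$ with $(s,a) \rightarrow_c (\vx',a')$. The goal is then to show $(\vx',a') \in \concretization{\abstr{I}_{2-}}$.

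First I would spell out both sides. Writing the scalar pre-state as $s = (\vx,i)$, the concrete read relation of Def.~\ref{def:read1} sets $v := a[i]$, leaves the array unchanged ($a'=a$), and yields post scalar state $(\vx,v,i)$, matching the heads of both rules. Membership $(s,a)\in\concretization{\abstr{I}_1}$ unfolds, by Def.~\ref{def:Galois1}, to: for every index $j\in\iota$ the triple $\big((\vx,i),(j,a[j])\big)$ lies in $\abstr{I}_1$. The target $(\vx',a')\in\concretization{\abstr{I}_{2-}}$ unfolds to: for every $k\in\iota$ the triple $\big((\vx,v,i),(k,a[k])\big)$ lies in $\abstr{I}_{2-}$. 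So the lemma reduces to producing, for each $k$, a witnessing instantiation of one of the two rules that imposes exactly this triple.

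The key step is a case split on whether $k=i$. When $k=i$ I would instantiate Rule~\ref{rule:read1_same} with $a_i := v$; its single premise $\big((\vx,i),(i,v)\big)\in\abstr{I}_1$ is the $j=i$ instance above, and it imposes $\big((\vx,v,i),(i,v)\big)$, which is the desired triple since $v=a[i]=a[k]$. When $k\neq i$ I would instantiate the nonlinear Rule~\ref{rule:read1_different} with $a_i:=v$ and $a_k:=a[k]$; its guard $k\neq i$ holds, and its two premises $\big((\vx,i),(i,v)\big)\in\abstr{I}_1$ and $\big((\vx,i),(k,a[k])\big)\in\abstr{I}_1$ are the $j=i$ and $j=k$ instances, so it imposes $\big((\vx,v,i),(k,a[k])\big)$, as required.

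The step I expect to be the crux is the application of the nonlinear rule: it needs its two $\abstr{I}_1$-premises to be simultaneously available \emph{with the same scalar valuation} $(\vx,i)$ and with $a_i,a_k$ being the values of one and the same array at $i$ and $k$. The point I would stress is that $\concretization{\cdot}$ supplies precisely this coherence: because all cells of the single concrete array $a$ appear in $\abstr{I}_1$ paired with the common scalar part $(\vx,i)$, the two copies of $\abstr{I}_1$ can both be instantiated from that one array. This is exactly why two copies (rather than one) are needed and why the rule remains sound. Once this is observed, both cases go through and together cover all $k\in\iota$, establishing $(\vx',a') \in \concretization{\abstr{I}_{2-}}$ and hence soundness.
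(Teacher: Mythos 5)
Your proof is correct and follows essentially the same route as the paper's: unfold the soundness criterion and the concretization of Def.~\ref{def:Galois1}, split on $k=i$ versus $k\neq i$, and discharge the premises of Rules~\ref{rule:read1_same} and~\ref{rule:read1_different} using the fact that all cells of the single concrete array appear in $\abstr{I}_1$ with the common scalar valuation $(\vx,i)$ together with $v=a[i]$. Your write-up is merely more explicit than the paper's (which dismisses the $k=i$ case as trivial), and your emphasis on the coherence supplied by $\gamma$ is exactly the intuition the paper gives after Def.~\ref{def:read1}.
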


\begin{proof}
Let $(\vx,i,a) \in \gamma(\abstr{I}_1)$, that is
$\forall k \in \iota~ \abstr{I}_1(\vx,i,k,a[k])$.
Suppose $(\vx,i,a) \xrightarrow{v:=a[i]}_c (\vx,i,v,a)$, that is $v=a[i]$.
Let us now show that $(\vx,i,v,a) \in \concretization{\abstr{I}_{2-}}$, that is,
\begin{inparaenum}[i)]
\item for all $k \in \iota$ such that $k \neq i$, $\abstr{I}_1(x,i,i,v)$ and $\abstr{I}_1(x,i,k,a[k])$ both hold: both follow from $\forall k \in \iota~ \abstr{I}_1(\vx,i,k,a[k])$, and, for the first, from $v=a[i]$;
\item the case $k = i$ is also trivial.
\end{inparaenum}
\end{proof}

\begin{lemma}
The forward abstract semantics of the write statement (Def.~\ref{def:write1}) is sound.
\end{lemma}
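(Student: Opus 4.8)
The plan is to follow the same template as the soundness proof just given for the read statement (Def.~\ref{def:read1}): fix an arbitrary concrete state in $\concretization{\abstr{I}_1}$ together with a concrete transition of the write statement, and verify that the resulting state lies in $\concretization{\abstr{I}_{2-}}$, where $\abstr{I}_{2-}$ collects exactly the elements forced into $\abstr{I}_2$ by Rules~\ref{rule:write1_different} and~\ref{rule:write1_same}. Unfolding the concretization $\gamma$ of Def.~\ref{def:Galois1} reduces the statement to a pointwise check over all indices $k \in \iota$, and the two Horn rules are precisely designed to cover the two cases $k \neq i$ and $k = i$ of this check.

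Concretely, I would start from $(\vx,i,v,a) \in \concretization{\abstr{I}_1}$, which by definition means $\abstr{I}_1\big((\vx,i,v),(k,a[k])\big)$ holds for every $k \in \iota$. Writing $a' = \mathit{store}(a,i,v)$ for the post-state array, the defining property of $\mathit{store}$ gives $a'[i] = v$ and $a'[k] = a[k]$ for every $k \neq i$. The goal is then to show $\abstr{I}_2\big((\vx,v,i),(k,a'[k])\big) \in \abstr{I}_{2-}$ for all $k$, by case analysis. For $k \neq i$, I instantiate Rule~\ref{rule:write1_different} at this $k$ with cell value $a_k = a[k] = a'[k]$: its premise $\abstr{I}_1\big((\vx,i,v),(k,a[k])\big) \land i \neq k$ holds, so it imposes $\abstr{I}_2\big((\vx,v,i),(k,a'[k])\big)$. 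For $k = i$, I instantiate Rule~\ref{rule:write1_same} with the witness $a_k = a[i]$: its premise $\abstr{I}_1\big((\vx,i,v),(i,a[i])\big)$ holds, and it imposes exactly $\abstr{I}_2\big((\vx,v,i),(i,v)\big)$, which is the required tuple since $a'[i] = v$. Collecting both cases yields $(\vx,i,v,a') \in \concretization{\abstr{I}_{2-}}$, as wanted.

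I do not expect a genuine obstacle here; the argument is a routine index-wise case split that mirrors the read lemma. The one point demanding care is the $k = i$ case, where the concrete write overwrites the old cell: soundness must produce the fresh value $v$, not the stale $a[i]$, at index $i$. This is handled by Rule~\ref{rule:write1_same}, whose conclusion already carries $v$ in the cell slot, while its premise only needs the existence of \emph{some} prior value $a_k$ at index $i$ --- which I supply via the witness $a_k = a[i]$ drawn from the unfolded concretization of the source. A secondary bookkeeping point is that the scalar tuple is reordered from $(\vx,i,v)$ to $(\vx,v,i)$ between premise and conclusion; since this is a mere renaming of the same scalar variables it does not affect $\gamma$, so I would simply note it in passing.
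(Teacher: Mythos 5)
Your proof is correct and follows essentially the same route as the paper's: unfold $\gamma$, fix the concrete write transition $a' = \mathit{store}(a,i,v)$, and split on $k \neq i$ versus $k = i$, discharging the cases with Rule~\ref{rule:write1_different} and Rule~\ref{rule:write1_same} respectively. You are in fact more explicit than the paper, which dismisses both cases as ``trivial''; you supply the witness $a_k = a[i]$ for the $k = i$ case and note that Rule~\ref{rule:write1_same} correctly produces the fresh value $v$ rather than the stale cell content.
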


\begin{proof}
Let $(\vx,i,a) \in \gamma(\abstr{I}_1)$, that is
$\forall k \in \iota~ \abstr{I}_1(\vx,i,k,a[k])$.
Suppose $(\vx,i,a) \xrightarrow{a[i]:=v}_c (\vx,i,v,a')$, that is, $a'[i]=v$ and for all $k \neq i$, $a'[k]=a[k]$
Let us now show that $(\vx,i,v,a') \in \concretization{\abstr{I}_{2-}}$, that is,  for all $k \in \iota$, either
\begin{inparaenum}[i)]
\item $\abstr{I}_1(\vx,i,v,k,a'_k)$ and $i\neq k$
\item $i=k$, $v=a'[k]$, and there exists $a_k$ such that $\abstr{I}_1(\vx,i,v,i,a_k)$.
\end{inparaenum}
Both cases are trivial.
\end{proof}

The following two lemma are also easily proved:
\begin{lemma}
The forward abstract semantics of array initialization (Def.~\ref{def:initialization1}) is sound.
\end{lemma}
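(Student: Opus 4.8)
The plan is to unfold the definitions of $\gamma$ (Def.~\ref{def:Galois1}) and of the imposed set $\abstr{I}_{2-}$, exactly as in the two preceding soundness proofs. The one point to keep straight is that the initialization statement enlarges the state space: at $p_1$ no array is present, so the concrete states are simply the $\vx \in \chi$ with $\abstr{I}_1(\vx)$, whereas at $p_2$ the array is present and $\gamma$ acts nontrivially on $\abstr{I}_2 \subseteq \chi \times (\iota \times \beta)$.

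First I would take a concrete predecessor, i.e.\ some $\vx$ with $\abstr{I}_1(\vx)$, and consider an arbitrary concrete successor $(\vx) \rightarrow_c (\vx, a')$, where $a' \in \arraytype{\iota}{\beta}$ is the nondeterministically chosen content. Reading the rule of Def.~\ref{def:initialization1} off its head, its imposed set is $\abstr{I}_{2-} = \{ (\vx, k, a_k) \mid \abstr{I}_1(\vx),\ k \in \iota,\ a_k \in \beta \}$; that is, once $\abstr{I}_1(\vx)$ holds, every pair $(k, a_k)$ is forced into $\abstr{I}_2$.

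It then remains to check $(\vx, a') \in \gamma(\abstr{I}_{2-})$, which by definition of $\gamma$ means $(\vx, i, a'[i]) \in \abstr{I}_{2-}$ for every $i \in \iota$. This is immediate: $\abstr{I}_1(\vx)$ holds, $i \in \iota$, and $a'[i] \in \beta$, so the triple lies in $\abstr{I}_{2-}$ by its very definition. Since $a'$ was arbitrary, every concrete successor is captured, which is precisely soundness. The bounded variant (array indexed by $0 \dots n-1$) is handled identically, restricting the quantification on $i$ to the in-range indices $0 \leq i < n$ on which the content is defined.

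I do not expect any genuine obstacle here: unlike the read statement, there are no nonlinear premises and no $i = k$ versus $i \neq k$ case split, so the argument collapses to a one-line unfolding once the change of state space between $p_1$ and $p_2$ has been accounted for.
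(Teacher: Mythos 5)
Your proof is correct and is precisely the routine unfolding that the paper omits (it only remarks that this lemma is ``easily proved''), mirroring the structure of the soundness proofs for the read and write statements. You also correctly handle the one delicate point — that $\abstr{I}_1 \subseteq \chi$ has no array component, so $\gamma$ acts trivially on the pre-state — after which membership of every successor $(\vx,a')$ in $\gamma(\abstr{I}_{2-})$ is immediate from the definition of the imposed set.
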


\begin{lemma}
The forward abstract semantics of the scalar statements (Def.~\ref{def:scalar1}) is sound.
\end{lemma}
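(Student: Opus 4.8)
The plan is to unfold the soundness criterion stated just before the read lemma and specialize it to the single Horn clause of Definition~\ref{def:scalar1}. Concretely, writing $\vx$ for the full tuple of scalar variables, I must show that $\concretization{\abstr{I}_{2-}}$ contains every state $(\vx',a')$ for which there exists some $(\vx,a) \in \concretization{\abstr{I}_1}$ with $(\vx,a) \rightarrow_c (\vx',a')$.

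First I would exploit the concrete semantics of a scalar statement: it acts as the identity on the array, so any step $(\vx,a) \rightarrow_c (\vx',a')$ forces $a'=a$ together with a scalar transition $\vx \rightarrow_s \vx'$. It therefore suffices to fix an arbitrary $(\vx,a) \in \concretization{\abstr{I}_1}$ and a transition $\vx \rightarrow_s \vx'$, and to prove $(\vx',a) \in \concretization{\abstr{I}_{2-}}$.

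Next I would unfold both concretizations via Definition~\ref{def:Galois1}. From $(\vx,a) \in \concretization{\abstr{I}_1}$ I get $\abstr{I}_1(\vx,k,a[k])$ for every index $k \in \iota$. To establish $(\vx',a) \in \concretization{\abstr{I}_{2-}}$ I must show $(\vx',k,a[k]) \in \abstr{I}_{2-}$ for every such $k$. This is where the rule fires: for each fixed $k$, instantiate the clause of Definition~\ref{def:scalar1} with distinguished value $a_k := a[k]$ and the transition $\vx \rightarrow_s \vx'$. Its two premises, $\abstr{I}_1(\vx,k,a[k])$ and $\vx \rightarrow_s \vx'$, both hold, so the clause imposes $(\vx',k,a[k])$ in $\abstr{I}_{2-}$, exactly as needed.

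There is essentially no hard step. The only point requiring care is the quantifier structure of $\gamma$: its universal over all indices demands that the rule fire once per cell, and this decoupling is sound precisely because the scalar transition $\vx \rightarrow_s \vx'$ is independent of $k$ and the array is preserved pointwise. Phrased differently, the loss of inter-cell relational information inherent in the abstraction is harmless for a statement that leaves the array untouched, which is why this lemma falls among those the authors regard as \emph{easily proved}.
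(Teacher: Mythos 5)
Your proof is correct, and it is precisely the routine unfolding the paper has in mind when it dismisses this lemma as ``easily proved'' without giving details: fix a concrete step, note the array is untouched so $a'=a$ and $\vx \rightarrow_s \vx'$, then fire the single Horn clause once per index $k$ with $a_k := a[k]$ to place every $(\vx',k,a'[k])$ in $\abstr{I}_{2-}$. This matches the case-analysis-free pattern of the paper's explicit soundness proofs for the read and write statements, so there is nothing to add.
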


\begin{remark}
The scalar statements include ``killing'' dead variables (Def.~\ref{def:kill}).
Note that, contrary to many other abstractions, in ours, removing some variables may cause irrecoverable loss of precision on other variables \cite[Sec.~4.2]{Monniaux_Alberti_SAS2015}:
if $v$ is live, then one can represent $\forall k,~ a[k]=v$, which implies $\forall k_1,k_2~ a[k_1] = a[k_2]$ (constantness), but if $v$ is discarded, the constantness of $a$ is lost.
\end{remark}

\begin{theorem}
If $\abstr{I}_1,\dots,\abstr{I}_m$ are a solution of a system of Horn clauses sound in the above sense, then $\gamma(\abstr{I}_1),\dots,\gamma(\abstr{I}_m)$ are inductive invariants with respect to the concrete semantics~$\rightarrow_c$.
\end{theorem}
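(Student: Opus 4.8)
The plan is to read off the two defining conditions of inductive invariants from Formulas~\ref{clause:init} and~\ref{clause:inductive} and verify each of them for the concretized predicates $\concretization{\abstr{I}_1},\dots,\concretization{\abstr{I}_m}$. Conceptually this is the standard abstract-interpretation soundness statement: a \emph{solution} of the abstract Horn system is a post-fixpoint of the abstract transformers, and the four preceding soundness lemmas say exactly that each abstract transformer over-approximates, through $\gamma$, the corresponding concrete transformer; hence the concretization is a post-fixpoint of the concrete transformers, i.e.\ an inductive invariant. The work is to make this slogan precise with the per-rule soundness notion already defined.

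First I would record that $\gamma$ is monotone: if $\abstr{I}\subseteq\abstr{J}$ then $\concretization{\abstr{I}}\subseteq\concretization{\abstr{J}}$, which is immediate from Definition~\ref{def:Galois1} since the universally quantified membership condition defining $\gamma$ only becomes easier to satisfy as the abstract set grows (equivalently, it is the monotonicity of the lower adjoint of a Galois connection). This monotonicity is the glue of the whole argument.

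For the inductiveness obligation on an edge $p_1\to p_2$ encoded by a Horn rule with head $\abstr{I}_2(\ve{g}(\ve{y}))$ and imposed set $\abstr{I}_{2-}$, I would chain three facts: (a) since $\abstr{I}_1,\dots,\abstr{I}_m$ is a solution, the rule's implication holds, so $\abstr{I}_{2-}\subseteq\abstr{I}_2$; (b) by the soundness of that rule, $\concretization{\abstr{I}_{2-}}$ contains every concrete successor $(\vx',a')$ of every $(\vx,a)\in\concretization{\abstr{I}_1}$; (c) monotonicity applied to (a) then gives $\concretization{\abstr{I}_{2-}}\subseteq\concretization{\abstr{I}_2}$, so every such successor lies in $\concretization{\abstr{I}_2}$, which is precisely inductiveness for that edge. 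Because every incoming edge of a given predicate yields a subset of the \emph{same} $\concretization{\abstr{I}_2}$, no genuine aggregation difficulty arises; likewise the nonlinear shape (several occurrences of $\abstr{I}_1$ in the antecedent) is harmless, since soundness is defined relative to a single source predicate and a single target.

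The initialization obligation needs a small variant, because the entry clause (Formula~\ref{clause:init}, and the array-creation rule of Definition~\ref{def:initialization1}) has no array-valued antecedent to feed into step (b). Here the solution property forces $\abstraction{S_{i_0}}\subseteq\abstr{I}_{i_0}$, and the extensivity half of the Galois connection, $S_{i_0}\subseteq\concretization{\abstraction{S_{i_0}}}$, combined with monotonicity of $\gamma$, yields $S_{i_0}\subseteq\concretization{\abstr{I}_{i_0}}$. The main obstacle is thus not any hard estimate but the careful matching of the \emph{local}, per-rule soundness condition to the \emph{global} notion of inductive invariant, and the recognition that the entry clause sits just outside the generic rule template and must be closed using $S\subseteq\concretization{\abstraction{S}}$ rather than step (b).
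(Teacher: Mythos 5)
Your proposal is correct and takes essentially the same approach as the paper: the paper's entire proof is a one-line appeal to ``the general properties of fixed points of monotone operators and Galois connections,'' and your argument is precisely that standard argument made explicit --- per-rule soundness of $\abstr{I}_{2-}$, the solution property $\abstr{I}_{2-}\subseteq\abstr{I}_2$, and monotonicity of $\gamma$ giving inductiveness on each edge, with the entry clause closed via $S_{i_0}\subseteq\concretization{\abstraction{S_{i_0}}}$. The only nit is terminological: $\gamma$ is the upper adjoint of the Galois connection rather than the lower one, but since both adjoints are monotone this affects nothing.
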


\begin{proof}
From the general properties of fixed points of monotone operators and Galois connections \cite{DBLP:journals/logcom/CousotC92}.
\end{proof}

\begin{figure}
\begin{center}
\begin{tikzpicture}[->, node distance=6.9em]
\node[state,initial by diamond] (init) {$\mathit{init}$};
\node[state] (loop) [right of=init] {$\mathit{loop}$};
\node[state] (write) [right of=loop] {$\mathit{write}$};
\node[state] (incr) [right of=write] {$\mathit{incr}$};
\node[state,accepting] (end) [below of=loop] {$\mathit{end}$};
\path (init) edge node[below, align=left] {$n > 0$\\$i := 0$} (loop);
\path (loop) edge node[above] {$i < n$} (write);
\path (write) edge node[above] {$a[i] := 42$} (incr);
\path (incr) edge[bend left=45] node[below] { $i := i+1$ } (loop);
\path (loop) edge node[left] {$i \geq n$} (end);
\end{tikzpicture}
\end{center}
\caption{Detailed control-flow graph for program~\ref{prog:array_fill1}}
\label{fig:array_fill1_detailed_cfg}
\end{figure}
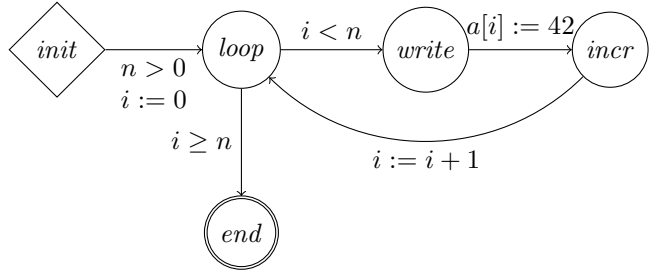

\begin{definition}[Property conversion]\label{def:property1}
A property ``at program point $p_i$, for all $\vx \in \chi$ and all $k \in \iota$, $\phi(\vx,k,a[k])$ holds'' (where $\phi$ is a formula, say over arithmetic) is converted into a Horn query $\forall \vx \in \chi~ \forall k \in \iota~ \phi(\vx,k,a_k)$.
\end{definition}

Our method for converting a scalar program into a system of Horn clauses over scalar variables is thus:
\begin{algo}\label{algo:construct_Horn}
\begin{enumerate}
\item Construct the control-flow graph of the program.
\item To each control point $p_i$, with vector of scalar variables $\vx_i$, associate a predicate $\abstr{I}_(\vx_i, k, a_k)$ in the Horn clause system (the vector of scalar variables may change from control point to control point).
\item For each transition of the program, generate Horn rules according to Def.~\ref{def:read1}, \ref{def:write1}, \ref{def:scalar1} as applicable (an initialization node does not need antecedents in its rule).
\item Generate Horn queries from desired properties according to Def.~\ref{def:property1}.
\end{enumerate}
\end{algo}

\begin{example*}[Ex.~\ref{ex:array_fill1}, continued]
Let us now apply the Horn abstract semantics from Definitions \ref{def:read1}, \ref{def:write1} and \ref{def:scalar1} to Program~\ref{prog:array_fill1}, following the detailed control-flow graph (Fig~\ref{fig:array_fill1_detailed_cfg});
in this case, $\alpha = \ZZ$, $\iota = \{ 0,\dots,n-1 \}$, $\chi = \ZZ$.
After a slight simplification of the Horn clauses, we obtain (Listing~\ref{Horn:array_fill1}):
{\small\begin{align}
\begin{aligned}
\forall n,k,a_k \in \ZZ~ 0 \leq k < n \implies \mathit{loop}(n,0,k,a_k)
\end{aligned}\label{rule:array_fill1_begin}\\
\begin{aligned}
\forall n,i,k,a_k \in \ZZ~ 0 \leq k < n \land i < n \land
\mathit{loop}(n,i,k,a_k)\\ \implies \mathit{write}(n,i,k,a_k)
\end{aligned}\rulespacing
\begin{aligned}
\forall n,i,k,a_k \in \ZZ~ 0 \leq k < n \land i \neq k \land
\mathit{write}(n,i,k,a_k)\\ \implies \mathit{incr}(n,i,k,a_k)
\end{aligned}\rulespacing
\begin{aligned}
\forall n,i,a_k \in \ZZ~ \land
\mathit{write}(n,i,i,a_k)\\ \implies \mathit{incr}(n,i,i,42)
\end{aligned}\rulespacing
\begin{aligned}
\forall n,i,k,a_k \in \ZZ~ 0 \leq k < n \land
\mathit{incr}(n,i,k,a_k)\\ \implies \mathit{loop}(n,i+1,k,a_k)
\end{aligned}\rulespacing
\begin{aligned}
\forall n,i,k,a_k \in \ZZ~ 0 \leq k < n \land i \geq n \land
\mathit{loop}(n,i,k,a_k)\\ \implies \mathit{end}(n,k,a_k)
\end{aligned}\label{rule:array_fill1_end}
\end{align}}

Finally, we add the postcondition (using Def.~\ref{def:property1}):
\begin{align}
\begin{aligned}
\forall n,k,a_k \in \ZZ~ 0 \leq k < n \land
\mathit{end}(n,i,k,a_k) \implies a_k = 42
\end{aligned}
\end{align}

Z3/PDR, \soft{Spacer} and \soft{Eldarica} (\verb+-splitClauses+), all unable to deal with the original array problem (Formulas \ref{equ:array_fill1_arrays_begin}--\ref{equ:array_fill1_arrays_end}) solve this problem quickly.
\end{example*}

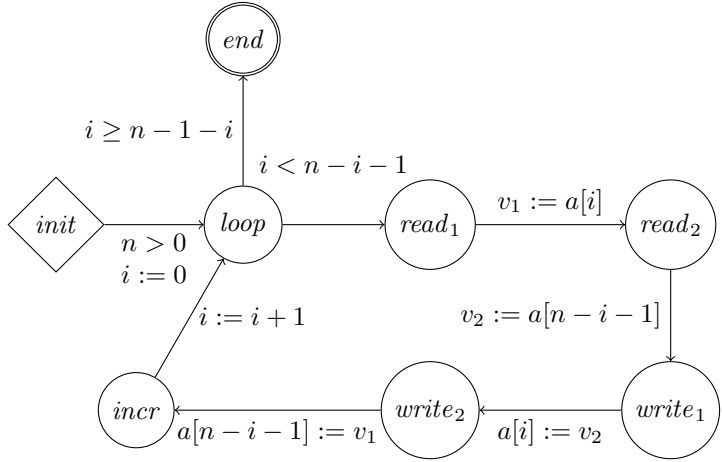
\begin{figure}
\begin{tikzpicture}[->, node distance=7em]
\node[state,initial by diamond] (init) {$\mathit{init}$};
\node[state] (loop) [right of=init] {$\mathit{loop}$};
\node[state] (read1) [right of=loop] {$\mathit{read}_1$};
\node[state] (read2) [node distance=9em, right of=read1] {$\mathit{read}_2$};
\node[state] (write1) [below of=read2] {$\mathit{write}_1$};
\node[state] (write2) [node distance=9em, left of=write1] {$\mathit{write}_2$};
\node[state] (incr) [left of=write2, node distance=11em] {$\mathit{incr}$};
\node[state,accepting] (end) [above of=loop] {$\mathit{end}$};
\path(init) edge node[below,align=left]{$n>0$\\$i:=0$} (loop);
\path(loop) edge node[above = 1.5em] {$i<n-i-1$} (read1);
\path(read1) edge node[above] {$v_1:=a[i]$} (read2);
\path(read2) edge node[left] {$v_2:=a[n-i-1]$} (write1);
\path(write1) edge node[below] {$a[i]:=v_2$} (write2);
\path(write2) edge node[below] {$a[n-i-1]:=v_1$} (incr);
\path(incr) edge node[right] {$i:=i+1$} (loop);
\path(loop) edge node[left]{$i\geq n-1-i$} (end);
\end{tikzpicture}

\caption{Array reversal}
\end{figure}

\begin{example}\label{ex:array_reverse_once1}
Consider now an array reversal procedure:
\lstinputlisting[label=prog:array_reverse_once1,caption={Array reversal}]{examples/array_reverse_once1/array_reverse_once1.c}

In order to prove that the final array is the reversal of the initial array, we keep a copy of the initial array as~$b$.
Applying the rules for the forward abstraction (Def.~\ref{def:read1}, \ref{def:write1}, \ref{def:scalar1}), and some simple simplifications (removal of dead variables, propagation of $j = n-i-1$), we obtain the Horn clauses (Listing~\ref{Horn:array_reverse_once1}):
{\small\begin{align}
\begin{aligned}
\forall n,k,a_k,l,b_l \in \ZZ~
   \mathit{init}(n,k,a_l,l,b_l) \\\implies \mathit{loop}(n,0,k,a_k,l,b_l)
\end{aligned}\rulespacing
\begin{aligned}
\forall n,i,k,a_k,l,b_l \in \ZZ~
   \mathit{loop}(n,i,k,a_k,l,b_l) \\ \land i< n-i-1 \implies
   \mathit{read}_1(n,i,k,a_k,l,b_l)
\end{aligned}\rulespacing
\begin{aligned}
\forall n,i,v_1,v_2,k,a_k,l,b_l \in \ZZ~
   \mathit{read}_1(n,i,k,a_k,l,b_l) \\ \land i \neq k \land
   \mathit{read}_1(n,i,i,v_1,l,b_l) \\ \implies
   \mathit{read}_2(n,i,v_1,k,a_k,l,b_l)
\end{aligned}\rulespacing
\begin{aligned}
\forall n,i,v_1,v_2,k,l,b_l \in \ZZ~
   \mathit{read}_1(n,i,i,v_1,b_l) \\ \implies
   \mathit{read}_2(n,i,v_1,i,v_1,l,b_l)
\end{aligned}\rulespacing
\begin{aligned}
\forall n,i,v_1,v_2,k,a_k,l,b_l \in \ZZ~
   \mathit{read}_2(n,i,v_1,v_2,k,a_k,l,b_l) \\\land n-1-i \neq k \land
   \mathit{read}_2(n,i,v_1,n-i-1,v_2,l,b_l)\\ \implies
   \mathit{write}_1(n,i,v_1,v_2,k,a_k,l,b_l)
\end{aligned}\rulespacing
\begin{aligned}
\forall n,i,v_1,v_2,l,b_l \in \ZZ~
   \mathit{read}_2(n,i,v_1,v_2,n-1-i,v_2,l,b_l) \\ \implies
   \mathit{write}_1(n,i,v_1,v_2,n-1-i,v_2,l,b_l)
\end{aligned}\rulespacing
\begin{aligned}
\forall n,i,v_1,v_2,k,a_k,l,b_l \in \ZZ~
   \mathit{write}_1(n,i,v_1,v_2,k,a_k,l,b_l) \\ \land i\neq k \implies
   \mathit{write}_2(n,i,v_1,k,a_k,l,b_l)
\end{aligned}\rulespacing
\begin{aligned}
\forall n,i,v_1,v_2,k,a_k,l,b_l \in \ZZ~
   \mathit{write}_1(n,i,v_1,v_2,k,a_k,l,b_l) \\\implies
   \mathit{write}_2(n,i,v_1,i,v_2,l,b_l)
\end{aligned}\rulespacing
\begin{aligned}
\forall n,i,v_1,k,a_k,l,b_l \in \ZZ~
   \mathit{write}_2(n,i,v_1,k,a_k,l,b_l) \\\land n-1-i\neq k \implies
   \mathit{incr}(n,i,k,a_k,l,b_l)
\end{aligned}\rulespacing
\begin{aligned}
\forall n,i,v_1,k,a_k,l,b_l \in \ZZ~
   \mathit{write}_2(n,i,v_1,k,a_k,l,b_l) \\\implies
   \mathit{incr}(n,i,n-1-i,v_1,l,b_l)
\end{aligned}\rulespacing
\begin{aligned}
\forall n,i,k,a_k,l,b_l \in \ZZ~
   \mathit{incr}(n,i,k,a_k,l,b_l) \\\implies
   \mathit{loop}(n,i+1,k,a_k,l,b_l)
\end{aligned}\rulespacing
\begin{aligned}
\forall n,i,k,a_k,l,b_l \in \ZZ~
   i\geq n-i-1 \land \mathit{loop}(n,i,k,a_k,l,b_l) \\\implies
   \mathit{end}_1(n,k,a_k,l,b_l)
\end{aligned}
\end{align}
}

We specify that, initially, $a[k]=b[k]$ for all legal index $k$ as:
\begin{align}
\forall n,k,a_k \in \ZZ~ 0 \leq k < n \implies \mathit{init}(n,k,a_k,k,a_k)\\
\begin{aligned}
\forall n,k,a_k,l,a_l \in \ZZ~ 0 \leq k < n \land 0 \leq l < n \land k \neq l
  \\\implies \mathit{init}(n,k,a_k,l,b_l)
\end{aligned}
\end{align}

We finally specify that the final array is the reversal of the initial:
\begin{align}
\begin{aligned}
\forall n,i,a_k,b_j \in \ZZ~ 0 \leq i < n \land
  \mathit{end}(n, i, a_k, n-1-k, b_j)\\
  \implies a_k = b_j
\end{aligned}
\end{align}

\noindent\soft{Z3/PDR} solves this problem within 4~s, \soft{Spacer} takes 5~min
\end{example}

\begin{example}
Consider now the problem of finding the minimum of an array slice $a[l \dots h-1]$, with value $b=a[p]$:
\lstinputlisting[label=prog:find_minimum,caption={Find minimum in an array slice}]{examples/find_minimum/find_minimum.c}

Again, we encode the abstraction of the statements (Def.~\ref{def:read1}, \ref{def:write1}, \ref{def:scalar1}) as Horn clauses (Listing~\ref{Horn:find_minimum}).
At the end we have a predicate $\mathit{end}(l,h,p,b,k,a[k])$ on which we impose the properties
\begin{align}
\begin{aligned}
\forall l,h,p,b,a_p~ \mathit{end}(l,h,p,b,p,a_p) \implies b = a_p
   \label{formula:find_minimum:found}
\end{aligned} \\
\begin{aligned}
\forall l,h,p,b,k,a_p,a_k~
  l \leq k < h \land \mathit{end}(l,h,p,b,k,a_k) \\ \implies b \leq a_k
   \label{formula:find_minimum:is_minimum}
\end{aligned}
\end{align}
Rule~\ref{formula:find_minimum:found} imposes the postcondition $b = a[p]$, Rule~\ref{formula:find_minimum:is_minimum} imposes the postcondition $\forall k~ l \leq k < h \implies b \leq a[k]$.
Again, \soft{Z3/PDR} and \soft{Spacer} solve this Horn system (but not \soft{Eldarica}).
\end{example}

The kind of relationship that can be inferred between loop indices, array indices and array contents is limited only by the capabilities of the Horn solver, as shown in the following example:

\begin{example}
Consider for instance this array fill where $a[i]$ gets $i \bmod 2$:
\lstinputlisting[label=prog:array_fill1_even_odd,caption={Fill 1D-array with even/odd values}]{examples/array_fill1_even_odd/array_fill1_even_odd.c}

The abstract semantics is Formulas \ref{rule:array_fill1_begin}--\ref{rule:array_fill1_end} except that the constant 42 is replaced by $i \bmod 2$.
We wish to prove postconditions
\begin{align}
\forall k~ 0 \leq 2k < n \implies a[2k] = 0\\
\forall k~ 0 \leq 2k+1 < n \implies a[2k+1] = 1
\end{align}
which get translated into Horn clauses
\begin{align}
\forall k~ \mathit{end}(n,2k,a_x) \implies a_x = 0\\
\forall k~ \mathit{end}(n,2k+1,a_x) \implies a_x = 1
\end{align}
\soft{Spacer} solves this problem (Listing~\ref{Horn:array_fill1_even_odd}) instantaneously, while \soft{Z3/PDR} cannot solve it.
We suppose this is because Z3/PDR cannot infer interpolants depending on divisibility predicates.
\end{example}

We have made no assumption regarding the nature of the indexing variable: we used integers because arrays indexed by an integer range are a very common kind of data structure, but really it can be any type supported by the Horn clause solver, e.g. rationals:

\begin{example}
Consider the following program, handling a mutable map \lstinline|a[]| from the rationals to the integers, initialized to 0:
\begin{lstlisting}
a[1] = 10;   a[2] = 20;   a[3] = 30;
\end{lstlisting}

We can encode it as (Listing~\ref{Horn:real_indexed_maps}):
\begin{align}
\forall x \in \QQ~ \mathit{init}(x, 0)\\
\forall x \in \QQ~ \forall a_x \in \ZZ~
  \mathit{init}(x, a_x) \land x\neq 1 \implies \mathit{w}_1(x, a_x)\\
\forall a_x \in \ZZ~
  \mathit{init}(1, ax) \implies \mathit{w}_1(1, 10)\\
\forall x \in \QQ~ \forall a_x \in \ZZ~
  \mathit{w}_1(x, a_x) \land x\neq 2 \implies \mathit{w}_2(x, a_x)\\
\forall a_x \in \ZZ~
  \mathit{w}_1(2, ax) \implies \mathit{w}_2(2, 20)\\
\forall x \in \QQ~ \forall a_x \in \ZZ~
  \mathit{w}_2(x, a_x) \land x\neq 3 \implies \mathit{exit}(x, a_x)\\
\forall a_x \in \ZZ~
  \mathit{w}_2(3, ax) \implies \mathit{exit}(3, 30)
\end{align}

The postcondition $\forall x\in \QQ~ a[x] \geq 0$, encoded as $\forall x\in \QQ~ a_x \geq 0$, is easily proved by \soft{Z3/PDR} and \soft{Spacer}.
\end{example}

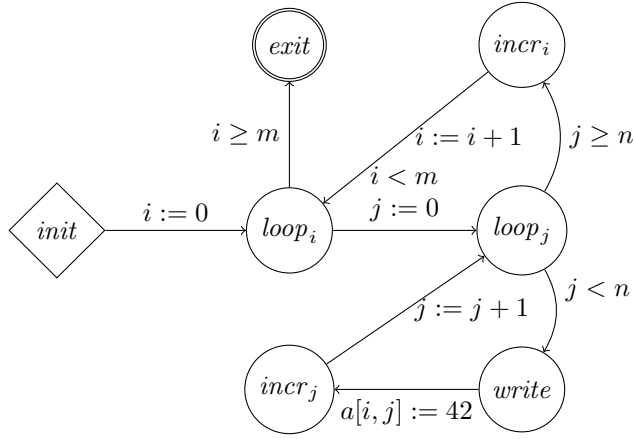
\begin{figure}
\begin{center}
\begin{tikzpicture}[->, node distance=8.7em]
\node[state, initial by diamond] (init) {$\mathit{init}$};
\node[state, right of=init] (loopi) {$\mathit{loop}_i$};
\node[state, right of=loopi] (loopj) {$\mathit{loop}_j$};
\node[state, below of=loopj, node distance=6em] (write) {$\mathit{write}$};
\node[state, left of=write] (incrj) {$\mathit{incr}_j$};
\node[state, above of=loopj, node distance=7em] (incri) {$\mathit{incr}_i$};
\node[state, accepting, above of=loopi, node distance=7em] (exit)  {$\mathit{exit}$};
\path(init) edge node[above] {$i:=0$} (loopi);
\path(loopi) edge node[above, align=left] {$i<m$ \\ $j:=0$} (loopj);
\path(loopj) edge[bend left] node[above right] {$j<n$} (write);
\path(write) edge node[below] {$a[i,j] := 42$} (incrj);
\path(incrj) edge node[right] {$j := j+1$} (loopj);
\path(incri) edge[right] node {$i := i+1$} (loopi);
\path(loopj) edge[bend right] node[right] {$j\geq n$} (incri);
\path(loopi) edge node[left] {$i\geq m$} (exit);
\end{tikzpicture}
\end{center}

\caption{Fill 2D array}
\label{fig:array_fill2}
\end{figure}

\emph{Matrices} are bidimensional arrays, that is, arrays indexed by two integers $x$ and $y$: $0 \leq x < m$, $0 \leq y < n$ for a $m \times n$ arrays.
More generally, arrays can be defined for an arbitrary number $d$ of dimensions.
Everything that we have seen so far applies when the type $\iota$ of the indexing variable is a subset of $\ZZ^d$ (e.g. for $d=2$, $\iota = \{ (x,y) \mid 0 \leq x < m \land 0 \leq y < n \}$).
We may therefore apply directly what precedes and generate Horn clauses referring to pairs of indices $(x,y)$.
Since not every solver supports these, one may instead use two indices $x$ and $y$: a comparison $(x_1,y_1) = (x_2,y_2)$ is expressed as $x_1=x_2 \land y_1=y_2$.

\begin{example}
The following program fills a $m \times n$ matrix:
\lstinputlisting[caption={Fill 2D-matrix},label={prog:array_fill2}]{examples/array_fill2/array_fill2.c}

It gets encoded as (Listing~\ref{Horn:array_fill2}, Fig.~\ref{fig:array_fill2}):
{\small\begin{align}
\begin{aligned}
\forall m,n,x,y,a_{xy} \in \ZZ~
  0 \leq x < m \land 0 \leq y < n \\ \implies
  \mathit{init}(m, n, x, y, a_{xy})
\end{aligned}\rulespacing
\begin{aligned}
\forall m,n,x,y,a_{xy} \in \ZZ~
  \mathit{init}(m, n, x, y, a_{xy}) \\ \implies
  \mathit{loop}_i(m, n, 0, x, y, a_{xy})
\end{aligned}\rulespacing
\begin{aligned}
\forall m,n,i,x,y,a_{xy} \in \ZZ~
  \mathit{loop}_i(m, n, i, x, y, a_{xy}) \land i<m \\ \implies
  \mathit{loop}_j(m, n, i, 0, x, y, a_{xy})
\end{aligned}\rulespacing
\begin{aligned}
\forall m,n,i,x,y,a_{xy} \in \ZZ~
  \mathit{loop}_i(m, n, i, x, y, a_{xy}) \land i\geq m \\ \implies
  \mathit{exit}(m, n, x, y, a_{xy})
\end{aligned}\rulespacing
\begin{aligned}
\forall m,n,i,j,x,y,a_{xy} \in \ZZ~
  \mathit{loop}_j(m, n, i, j, x, y, a_{xy}) \land j<n \\ \implies
  \mathit{write}(m, n, i, j, x, y, a_{xy})
\end{aligned}\rulespacing
\begin{aligned}
\forall m,n,i,j,x,y,a_{xy} \in \ZZ~
  \mathit{loop}_j(m, n, i, j, x, y, a_{xy}) \land j\geq n \\ \implies
  \mathit{incr}_i(m, n, i, x, y, a_{xy})
\end{aligned}\rulespacing
\begin{aligned}
\forall m,n,i,j,x,y,a_{xy},a_{ij} \in \ZZ \\
  \mathit{write}(m, n, i, j, x, y, a_{xy})  \land
  \mathit{write}(m, n, i, j, i, j, a_{ij}) \\ \land
  \land (i\neq x \lor j\neq y) \implies
  \mathit{incr}_j(m, n, i, j, x, y, a_{xy})
\end{aligned}\rulespacing
\begin{aligned}
\forall m,n,i,j,a_{ij} \in \ZZ~
  \mathit{write}(m, n, i, j, i, j, a_{ij}) \\ \implies
  \mathit{incr}_j(m, n, i, j, i, j, 42)
\end{aligned}\rulespacing
\begin{aligned}
\forall m,n,i,j,x,y,a_{xy} \in \ZZ~
  \mathit{incr}_j(m, n, i, j, x, y, a_{xy}) \\ \implies
  \mathit{loop}_j(m, n, i, j+1, x, y, a_{xy})
\end{aligned}\rulespacing
\begin{aligned}
\forall m,n,i,x,y,a_{xy} \in \ZZ~
  \mathit{incr}_i(m, n, i, x, y, a_{xy}) \\ \implies
  \mathit{loop}_i(m, n, i+1, x, y, a_{xy})
\end{aligned}
\end{align}
}

Again, we can prove that $\forall m, n, x, y, a_{xy} \in \ZZ,
\mathit{exit} \implies a_{xy} = 42$; otherwise said, finally,
$\forall x,y~ a[x,y]=42$.
\end{example}

\section{Sortedness}
\label{sec:sortedness}
The Galois connection of Def.~\ref{def:Galois1} expresses relations of the form $\forall k \in \iota~ \phi(\vx, k, a[k])$ where $\vx$ are variables from the program, $a$ a map and $k$ an index into the map $a$; in other words, relations between each array element individually and the rest of the variables.
It cannot express properties such as sortedness, which link \emph{two} array elements: $\forall k_1,k_2 \in \iota~ k_1 < k_2 \implies a[k_1] \leq a[k_2]$.
Let us now see an abstraction with two ``distinguished cells'', capable of representing such properties:

\begin{definition}
The concretization with two indices of $\abstr{I} \subseteq \chi \times (\iota \times \beta)^2$ is
\begin{equation}
\concretization[2]{\abstr{I}} = \{ (\vx,a) \mid
  \forall k_1,k_2\in\iota~ (\vx,k_1,a[k_1],k_2,a[k_2]) \in \abstr{I} \}
\end{equation}
The abstraction with two indices of $I \subseteq \chi \times \arraytype{\iota}{\beta}$ is
\begin{equation}
\abstraction[2]{I} = \{ (\vx,k_1,a[k_1],k_2,a[k_2]) \mid x \in \chi, k_1,k_2 \in \iota \}
\end{equation}
\end{definition}

\begin{theorem}
$\alpha_2$ and $\gamma_2$ form a Galois connection
\begin{equation*}
 \parts{\chi \times \arraytype{\iota}{\beta}}
 \galois{\alpha_2}{\gamma_2} \parts{\chi \times (\iota \times \beta)^2}.
\end{equation*}
\end{theorem}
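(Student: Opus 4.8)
The plan is to verify the defining adjunction of a Galois connection directly. Since both lattices are powersets ordered by inclusion, it suffices to show that for every concrete set $I \subseteq \chi \times \arraytype{\iota}{\beta}$ and every abstract set $\abstr{I} \subseteq \chi \times (\iota \times \beta)^2$,
\begin{equation*}
\abstraction[2]{I} \subseteq \abstr{I} \iff I \subseteq \concretization[2]{\abstr{I}}.
\end{equation*}
Once this equivalence holds, monotonicity of both maps is automatic and $(\alpha_2,\gamma_2)$ is by definition a Galois connection. Both sides are pure membership statements, so the proof is just a matter of unfolding the two definitions and matching quantifiers; no arithmetic or fixed-point machinery is involved.

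For the forward direction I would assume $\abstraction[2]{I} \subseteq \abstr{I}$ and take an arbitrary $(\vx,a) \in I$. To conclude $(\vx,a) \in \concretization[2]{\abstr{I}}$ I must check $(\vx,k_1,a[k_1],k_2,a[k_2]) \in \abstr{I}$ for all $k_1,k_2 \in \iota$; but for fixed $k_1,k_2$ this tuple lies in $\abstraction[2]{I}$ by the very definition of $\alpha_2$ (it is generated from $(\vx,a) \in I$), hence in $\abstr{I}$ by the assumed inclusion. For the backward direction I would assume $I \subseteq \concretization[2]{\abstr{I}}$ and pick an arbitrary element of $\abstraction[2]{I}$; by definition it has the shape $(\vx,k_1,a[k_1],k_2,a[k_2])$ for some $(\vx,a) \in I$ and $k_1,k_2 \in \iota$. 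Because $(\vx,a) \in I \subseteq \concretization[2]{\abstr{I}}$, the universally quantified condition defining $\gamma_2$ holds for this $(\vx,a)$, and instantiating it at exactly these $k_1,k_2$ yields $(\vx,k_1,a[k_1],k_2,a[k_2]) \in \abstr{I}$. This closes the equivalence.

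A cleaner way to see why this works, which I would note, is that $(\alpha_2,\gamma_2)$ is an instance of the generic element-wise powerset Galois connection induced by a representation map. Setting $f(\vx,a) = \{(\vx,k_1,a[k_1],k_2,a[k_2]) \mid k_1,k_2 \in \iota\}$, one has $\abstraction[2]{I} = \bigcup_{(\vx,a) \in I} f(\vx,a)$ and $\concretization[2]{\abstr{I}} = \{(\vx,a) \mid f(\vx,a) \subseteq \abstr{I}\}$; the equivalence above is then the standard fact that $\bigcup_{(\vx,a)\in I} f(\vx,a) \subseteq \abstr{I}$ iff $f(\vx,a)\subseteq \abstr{I}$ for every $(\vx,a)\in I$. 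This is structurally the same argument as for the one-index connection of Definition~\ref{def:Galois1}, with the single observation $a[i]$ replaced by the pair $(a[k_1],a[k_2])$; one could even reuse that earlier theorem with index set $\iota^2$ and obtain this one immediately.

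There is no genuinely hard step: the entire content is quantifier bookkeeping. The only point that warrants care is the exact reading of the set-builder defining $\alpha_2$, which as written binds $x \in \chi$ and $k_1,k_2 \in \iota$ but whose real governing constraint is that $(\vx,a)$ ranges over $I$. I would therefore first restate it as $\abstraction[2]{I} = \{(\vx,k_1,a[k_1],k_2,a[k_2]) \mid (\vx,a) \in I,\ k_1,k_2 \in \iota\}$, making the dependence on $I$ explicit, before running the membership argument.
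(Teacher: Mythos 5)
Your proof is correct. The paper actually states this theorem without any proof — it treats both the one-index and two-index connections as instances of the standard Cousot-style Galois connection framework cited earlier — and your adjunction check (equivalently, the representation-map formulation with $f(\vx,a) = \{(\vx,k_1,a[k_1],k_2,a[k_2]) \mid k_1,k_2 \in \iota\}$, together with your correct repair of the set-builder defining $\abstraction[2]{I}$ so that it binds $(\vx,a) \in I$) is precisely the standard argument the paper leaves implicit.
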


With respect to implementation efficiency, it may be preferable to break this symmetry between indices $k_1$ and $k_2$ by imposing $k_1 \leq k_2$ for some total order. One then gets:

\begin{definition}
The concretization with two ordered indices of $\abstr{I} \subseteq \chi \times (\iota \times \beta)^2$ is
\begin{equation}
\concretization[2\leq]{\abstr{I}} = \{ (\vx,a) \mid
  \forall k_1 \leq k_2\in\iota~ (\vx,k_1,a[k_1],k_2,a[k_2]) \in \abstr{I} \}
\end{equation}
The abstraction with two indices of $I \subseteq \chi \times \arraytype{\iota}{\beta}$ is
\begin{equation}
\abstraction[2\leq]{I} = \{ (\vx,k_1,a[k_1],k_2,a[k_2]) \mid x \in \chi, k_1 \leq k_2 \in \iota \}
\end{equation}
\end{definition}

\begin{theorem}
$\alpha_{2\leq}$ and $\gamma_{2\leq}$ form a Galois connection
\begin{multline*}
 \parts{\chi \times \arraytype{\iota}{\beta}}
 \galois{\alpha_{2\leq}}{\gamma_{2\leq}}\\
 \parts{\{(x,k_1,v_1,k_2,v_2) \mid x \in \chi,~ k_1 \leq k_2 \in \iota,~ v_1, v_2 \in \beta \}}.
\end{multline*}
\end{theorem}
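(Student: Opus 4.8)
The plan is to verify the defining adjunction of a Galois connection directly: for every concrete set $I \subseteq \chi \times \arraytype{\iota}{\beta}$ and every abstract set $\abstr{I}$ in the restricted domain, one must show
\[
\abstraction[2\leq]{I} \subseteq \abstr{I} \iff I \subseteq \concretization[2\leq]{\abstr{I}}.
\]
Since both sides of the connection are powersets ordered by inclusion, this single equivalence is exactly what is required; monotonicity of $\alpha_{2\leq}$ and $\gamma_{2\leq}$, together with the extensivity of $\gamma_{2\leq}\circ\alpha_{2\leq}$ and the reductivity of $\alpha_{2\leq}\circ\gamma_{2\leq}$, all follow formally from it, so there is no need to check them separately.

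First I would observe that $\alpha_{2\leq}$ and $\gamma_{2\leq}$ are an instance of the standard direct-image/universal-preimage adjunction associated with a binary relation, exactly as for the two preceding Galois-connection theorems. Let $R$ be the relation between $\chi \times \arraytype{\iota}{\beta}$ and the restricted abstract set that relates $(\vx,a)$ to $(\vx,k_1,a[k_1],k_2,a[k_2])$ for every $k_1 \leq k_2$ in $\iota$. Then $\abstraction[2\leq]{I}$ is precisely the direct image $\{ z \mid \exists (\vx,a)\in I,\ (\vx,a)\,R\,z \}$ of $I$ under $R$, and $\concretization[2\leq]{\abstr{I}}$ is precisely the universal preimage $\{ (\vx,a) \mid \forall z,\ (\vx,a)\,R\,z \Rightarrow z \in \abstr{I} \}$. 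For any relation this post/pre pair is a Galois connection, because both $\abstraction[2\leq]{I} \subseteq \abstr{I}$ and $I \subseteq \concretization[2\leq]{\abstr{I}}$ unfold to the same statement, namely ``for all $(\vx,a) \in I$ and all $k_1 \leq k_2$ in $\iota$, $(\vx,k_1,a[k_1],k_2,a[k_2]) \in \abstr{I}$''.

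To keep the argument self-contained I would then spell out the two implications. For the forward direction, assuming $\abstraction[2\leq]{I} \subseteq \abstr{I}$, I take an arbitrary $(\vx,a) \in I$ and an arbitrary ordered pair $k_1 \leq k_2$; by definition $(\vx,k_1,a[k_1],k_2,a[k_2]) \in \abstraction[2\leq]{I}$, hence it lies in $\abstr{I}$, which is exactly the membership condition witnessing $(\vx,a) \in \concretization[2\leq]{\abstr{I}}$. For the converse, assuming $I \subseteq \concretization[2\leq]{\abstr{I}}$, I take an arbitrary generator $(\vx,k_1,a[k_1],k_2,a[k_2])$ of $\abstraction[2\leq]{I}$ arising from some $(\vx,a) \in I$ with $k_1 \leq k_2$; since $(\vx,a) \in \concretization[2\leq]{\abstr{I}}$, the universal quantifier in the definition of $\gamma_{2\leq}$, instantiated at this very pair, gives membership in $\abstr{I}$.

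The only point requiring care, and the one place this differs from the symmetric two-index theorem, is the restriction to ordered tuples $k_1 \leq k_2$. I would check that $\alpha_{2\leq}$ genuinely lands in the declared codomain, since it only ever emits tuples with $k_1 \leq k_2$ and is therefore a well-defined map into the restricted powerset, and that the universal quantifier in $\gamma_{2\leq}$ ranges over exactly the same ordered pairs, so that no constraint on pairs with $k_1 > k_2$ is ever demanded of $\abstr{I}$ or produced from $I$. Once the two restrictions are seen to coincide, the displayed equivalence is an immediate unfolding of definitions, and no genuine arithmetic obstacle remains.
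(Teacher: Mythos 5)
Your proof is correct. The paper gives no explicit proof of this theorem at all --- it treats the statement (like the two preceding Galois-connection theorems) as a standard instance of the direct-image/universal-preimage adjunction from the cited Cousot--Cousot framework --- and your argument, including the check that the ordering restriction $k_1 \leq k_2$ is imposed consistently on both $\alpha_{2\leq}$ and $\gamma_{2\leq}$, is precisely that standard unfolding of the definitions.
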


\begin{definition}[Read statement, two indices $k_1 \leq k_2$]\label{def:read2}
The abstraction of $v := a[i]$ is:
{\small\begin{align}
\begin{aligned}
\forall \vx \in \chi~ \forall i,k_1,k_2\in\iota~ \forall v,a_{k_1},a_{k_2} \in \beta\\
  \abstr{I}_1(\vx, i, k_1, a_{k_1}, k_2, a_{k_2}) \land
  \abstr{I}_1(\vx, i, i, v, k_2, a_{k_2}) \\ \land
  k_1 \neq i \land i < k_2 \implies
  \abstr{I}_2(\vx, i, v, k_1, a_{k_1}, k_2, a_{k_2})
\end{aligned} \label{rule:read2_begin}\\
\begin{aligned}
\forall \vx \in \chi~ \forall i,k_1,k_2\in\iota~ \forall v,a_{k_1},a_{k_2} \in \beta \\
  \abstr{I}_1(\vx, i, k_1, a_{k_1}, k_2, a_{k_2}) \land
  \abstr{I}_1(\vx, i, k_1, a_{k_1}, i, v) \\ \land
  k_2 \neq i \land k_1 < i \implies
  \abstr{I}_2(\vx, i, v, k_1, a_{k_1}, k_2, a_{k_2})
\end{aligned}\rulespacing
\begin{aligned}
\forall \vx \in \chi~ \forall i,k_2\in\iota~ \forall v,a_{k_2} \in \beta \\
  \abstr{I}_1(\vx, i, i, v, k_2, a_{k_2}) \land i < k_2 \implies
  \abstr{I}_2(\vx, i, i, v, k_2, a_{k_2})
\end{aligned}\rulespacing
\begin{aligned}
\forall \vx \in \chi~ \forall i,k_1\in\iota~ \forall v,a_{k_1} \in \beta \\
  \abstr{I}_1(\vx, i, k_1, a_{k_1}, i, v) \land k_1 \leq i \implies
  \abstr{I}_2(\vx, i, k_1, a_{k_1}, i, v)
\end{aligned}\label{rule:read2_end}
\end{align}} 
\end{definition}

\begin{definition}[Write statement, two indices $k_1 \leq k_2$]\label{def:write2}
The abstraction of $a[i]:=v$ is:
{\small\begin{align}
\begin{aligned}
\forall \vx \in \chi~ \forall i,k_1,k_2\in\iota~ \forall v,a_{k_1},a_{k_2} \in \beta\\
  \abstr{I}_1(\vx, i, v, k_1, a_{k_1}, k_2, a_{k_2})
  \land i \neq k_1 \land i \neq k_2 \\ \implies
  \abstr{I}_2(\vx, i, v, k_1, a_{k_1}, k_2, a_{k_2})
\end{aligned}\rulespacing
\begin{aligned}
\forall \vx \in \chi~ \forall i, k_2\in\iota~
  \forall v,a_{k_1},a_{k_2} \in \beta~~ \land i \neq k_2 \\
  \land \abstr{I}_1(\vx, i, v, i, a_{k_1}, k_2, a_{k_2})
  \implies
  \abstr{I}_2(\vx, i, v, i, v, k_2, a_{k_2})
\end{aligned}\rulespacing
\begin{aligned}
\forall \vx \in \chi~ \forall i,k_1\in\iota~
  \forall v,a_{k_1},a_{k_2} \in \beta`~ \land i \neq k_1 \\
  \land \abstr{I}_1(\vx, i, v, k_1, a_{k_1}, i, a_{k_2})
  \implies
  \abstr{I}_2(\vx, i, v, k_1, a_{k_1}, i, v)
\end{aligned}\rulespacing
\begin{aligned}
\forall \vx \in \chi~ \forall i\in\iota~ \forall v,a_k \in \beta\\
  \abstr{I}_1(\vx, i, v, i, a_k, i, a_k)
  \implies
  \abstr{I}_2(\vx, i, v, i, v, i, v)
\end{aligned}
\end{align}}
\end{definition}

\begin{lemma}
The abstract forward semantics of the read statement, with two indices $k_1 \leq k_2$ (Def.~\ref{def:read2}) is a sound abstraction of the concrete semantics given in Def.~\ref{def:read1}.
\end{lemma}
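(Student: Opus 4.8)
The plan is to instantiate exactly the soundness criterion stated just before the single-index read lemma, but now with respect to the two-ordered-index concretization $\gamma_{2\leq}$. I would fix an arbitrary concrete state $(\vx,i,a) \in \gamma_{2\leq}(\abstr{I}_1)$ together with its image under the concrete read relation of Def.~\ref{def:read1}, namely $(\vx,i,v,a)$ with $v = a[i]$ and the array unchanged, and prove $(\vx,i,v,a) \in \gamma_{2\leq}(\abstr{I}_{2-})$, where $\abstr{I}_{2-}$ is the set of tuples forced into $\abstr{I}_2$ by the four rules of Def.~\ref{def:read2}.

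First I would unfold the hypothesis: $(\vx,i,a) \in \gamma_{2\leq}(\abstr{I}_1)$ means precisely that $\abstr{I}_1(\vx,i,j_1,a[j_1],j_2,a[j_2])$ holds for every ordered pair $j_1 \leq j_2$ in $\iota$ (here the scalar part carries the read index $i$). Unfolding the goal, I must establish $\abstr{I}_2(\vx,i,v,k_1,a[k_1],k_2,a[k_2])$ for every ordered pair $k_1 \leq k_2$. The proof then proceeds by a case analysis on the position of $i$ relative to the fixed pair $(k_1,k_2)$, matching each case to the rule producing the required tuple: the cases $i < k_1$ and $k_1 < i < k_2$ are discharged by the first rule~(\ref{rule:read2_begin}) (guard $k_1 \neq i \land i < k_2$); the cases $k_2 < i$ and again $k_1 < i < k_2$ by the second rule (guard $k_2 \neq i \land k_1 < i$); the boundary case $k_1 = i < k_2$ by the third rule; and the boundary cases $k_1 \leq i = k_2$, including the degenerate $k_1 = k_2 = i$, by the last rule~(\ref{rule:read2_end}).

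In each case I would instantiate the chosen rule with $a_{k_1} := a[k_1]$, $a_{k_2} := a[k_2]$ and $v := a[i]$, then discharge its two obligations. The arithmetic guards hold by the hypotheses defining the case. The antecedent atoms of the form $\abstr{I}_1(\vx,i,\dots)$ follow from the unfolded hypothesis, provided the two index slots occurring in each atom form an ordered pair: for instance the atom $\abstr{I}_1(\vx,i,i,v,k_2,a[k_2])$ needed by the first rule is the instance $j_1 := i,\ j_2 := k_2$ of the hypothesis, legitimate exactly because the guard $i < k_2$ yields $i \leq k_2$; symmetrically the atom $\abstr{I}_1(\vx,i,k_1,a[k_1],i,v)$ needed by the second rule is the instance $j_1 := k_1,\ j_2 := i$, justified by the guard $k_1 < i$. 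The equation $v = a[i]$ is what lets me identify the value at the read cell with the distinguished value $v$ that appears in the conclusions.

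The main obstacle is not any single computation but ensuring the case split is genuinely \emph{exhaustive} over all ordered pairs $k_1 \leq k_2$, and that the guard of the selected rule always suffices to certify the ordered-pair side condition $j_1 \leq j_2$ required to invoke the hypothesis — especially at the boundaries $k_1 = i$ and $k_2 = i$, where a cell index coincides with the read index and a value slot in the conclusion collapses to $v$. These guards are exactly what the rules of Def.~\ref{def:read2} were designed to enforce, so once the case analysis is laid out each case is routine; soundness of the collection follows because the union of the four rules covers every ordered pair.
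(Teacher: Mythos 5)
Your proof is correct and follows essentially the same route as the paper's: a case analysis on the position of the read index $i$ relative to the ordered pair $k_1 \leq k_2$, showing that the guards of the four rules of Def.~\ref{def:read2} exhaustively cover all configurations and that each rule's antecedents are instances of the unfolded hypothesis $(\vx,i,a) \in \gamma_{2\leq}(\abstr{I}_1)$. The paper merely lists the four guard conditions and asserts each case is easy; your write-up additionally verifies the ordering side conditions ($i \leq k_2$, $k_1 \leq i$) needed to legitimately instantiate the hypothesis at the boundary cases, which is a welcome level of detail but not a different argument.
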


\begin{proof}
Let $k_1 \leq k_2 \in \iota$ and $i \in \iota$, $\iota$ being totally ordered.
Then one necessarily falls in one of the following cases, corresponding to rules \ref{rule:read2_begin}--\ref{rule:read2_end}: 
\begin{inparaenum}[i)]
\item $k_1 \neq i \land i < k_2$
\item $k_2 \neq i \land k < i$
\item $k_1 = i \land i < k_2$
\item $k_2 = i \land k_1 \leq i$
\end{inparaenum}
It is easy to see that, for each of these cases, the corresponding rule is sound.
\end{proof}

\begin{lemma}
The abstract forward semantics of the write statement, with two indices $k_1 \leq k_2$ (Def.~\ref{def:write2}) is a sound abstraction of the concrete semantics given in Def.~\ref{def:write1}.
\end{lemma}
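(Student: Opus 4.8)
The plan is to follow exactly the structure of the soundness proof for the two-index read statement (Def.~\ref{def:read2}), adapting it to the concrete write relation of Def.~\ref{def:write1}. Fix a concrete state $(\vx,i,v,a) \in \gamma_{2\leq}(\abstr{I}_1)$; by definition of $\gamma_{2\leq}$ this means that $\abstr{I}_1(\vx,i,v,k_1,a[k_1],k_2,a[k_2])$ holds for every ordered pair $k_1 \leq k_2$ in $\iota$. The concrete write produces $(\vx,i,v,a')$ with $a' = \mathit{store}(a,i,v)$, i.e.\ $a'[i]=v$ and $a'[k]=a[k]$ for every $k \neq i$. By the soundness criterion stated before the read lemma (with $\gamma$ replaced by $\gamma_{2\leq}$), it suffices to show $(\vx,i,v,a') \in \gamma_{2\leq}(\abstr{I}_{2-})$, that is, for every $k_1 \leq k_2$ in $\iota$ one of the four rules of Def.~\ref{def:write2} imposes $\abstr{I}_2(\vx,i,v,k_1,a'[k_1],k_2,a'[k_2])$.

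First I would observe that, since $\iota$ is totally ordered and $k_1 \leq k_2$, the position of $i$ relative to $k_1,k_2$ falls into exactly four mutually exclusive and exhaustive cases, according to whether $i=k_1$ and/or $i=k_2$: (i) $i \neq k_1 \land i \neq k_2$; (ii) $i=k_1 \land i \neq k_2$ (whence $i<k_2$); (iii) $i \neq k_1 \land i=k_2$ (whence $k_1<i$); (iv) $i=k_1=k_2$. These match, in order, the guards of the four rules. In each case I would compute $a'[k_1]$ and $a'[k_2]$ from the store equation and check that the rule's antecedent is an instance of $\abstr{I}_1$ guaranteed by the hypothesis above and that its consequent is precisely $\abstr{I}_2(\vx,i,v,k_1,a'[k_1],k_2,a'[k_2])$. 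For instance, in case (ii) we have $a'[k_1]=v$ and $a'[k_2]=a[k_2]$; the second rule needs $\abstr{I}_1(\vx,i,v,i,a[i],k_2,a[k_2])$, which is the instance of the hypothesis at the ordered pair $(i,k_2)$ (legal since $i<k_2$), and it concludes $\abstr{I}_2(\vx,i,v,i,v,k_2,a[k_2])$, as required. Cases (iii) and (iv) are the symmetric and degenerate analogues using the third and fourth rules with the pairs $(k_1,i)$ and $(i,i)$, while case (i) uses the first rule with $(k_1,k_2)$ and the two elements unchanged.

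The main thing to get right --- and the only real obstacle --- is the bookkeeping of the ordering constraint: I must verify that the ordered pair indexing the $\abstr{I}_1$ instance demanded by each rule is itself a legal pair with $k_1 \leq k_2$ (so that it is genuinely provided by the hypothesis on $\gamma_{2\leq}(\abstr{I}_1)$), and that the four guards are jointly exhaustive so that no reachable $(k_1,k_2)$ is left uncovered. This is in fact simpler than for the read statement, since all four write rules are \emph{linear} (each antecedent contains a single occurrence of $\abstr{I}_1$), so no nonlinear combination of two stored facts is needed; the only subtlety is that in cases (ii) and (iii) the modified element sits at index $i$, and one must confirm that $i$ occupies the appropriate end of the ordered pair. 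Once exhaustiveness and pair-legality are confirmed, the verification in each case reduces to an immediate substitution, exactly as in the read lemma.
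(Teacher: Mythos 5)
Your proof is correct and follows essentially the same route as the paper's: the identical exhaustive four-case split on whether $i$ equals $k_1$ and/or $k_2$, with each case discharged by the matching rule of Def.~\ref{def:write2}. The paper merely states that each case is easily proved sound; your added bookkeeping (checking pair-legality of the $\abstr{I}_1$ instances under $k_1 \leq k_2$, and noting the rules are linear) is a faithful elaboration of that same argument.
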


\begin{proof}
Similarly, one is necessarily in one of the four exclusive cases, each corresponding to a rule easily proved to be sound:
\begin{inparaenum}[i)]
\item $i \neq k_1 \land i \neq k_2$
\item $i = k_1 \land i \neq k_2$
\item $i \neq k_1 \land i = k_2$
\item $i = k_1 = k_2$.
\end{inparaenum}
\end{proof}

It is possible to mix abstractions with one or two ``distinguished cells'' within the same problem. Let us see how to convert between the two:

\begin{definition}\label{def:expansion1to2}
Let $\abstr{I}_1$ be an abstraction according to $\galois{\gamma}{\alpha}$; we wish to expand it to an abstraction $\abstr{I}_2$ according to $\galois{\gamma_{2\leq}}{\alpha_{2\leq}}$.

\begin{align}
\begin{aligned}
\forall \vx \in \chi~ \forall k_1, k_2\in\iota~ \forall \beta_1,\beta_2 \in \beta\\
  \abstr{I}_1(\vx, k_1, a_{k_1}) \land \abstr{I}_1(\vx, k_2, a_{k_2}) 
  \land k_1 < k_2
  \\ \implies
  \abstr{I}_2(\vx, k_1,  a_{k_1}, k_2, a_{k_2})
\end{aligned}\rulespacing
\begin{aligned}
\forall \vx \in \chi~ \forall k \in\iota~ \forall \beta \in \beta\\
  \abstr{I}_1(\vx, k) \implies
  \abstr{I}_2(\vx, k,  a_k, k, a_k)
\end{aligned}
\end{align}
\end{definition}

\begin{remark}[Initialization]\label{def:initialization2}
By coalescing these rules with initialization for one ``distinguished cell'' (Def.~\ref{def:initialization1}), we obtain a direct abstraction of initialization for the case of an array indexed by $0 \dots n-1$, which we use in our examples:

\begin{align}
\begin{aligned}
\forall \vx \in \chi~ \forall n \in \ZZ~ \forall k_1, k_2\in\ZZ~ \forall \beta_1,\beta_2 \in \beta\\
  0 \leq k_1 < k_2 < n \land
  \abstr{I}_1(\vx) \implies
  \abstr{I}_2(\vx, k_1,  a_{k_1}, k_2, a_{k_2})
\end{aligned}\rulespacing
\begin{aligned}
\forall \vx \in \chi~ \forall k \in\ZZ~ \forall \beta \in \beta\\
  0 \leq k < n \land
  \abstr{I}_1(\vx) \implies  \abstr{I}_2(\vx, k,  a_k, k, a_k)
\end{aligned}
\end{align}
\end{remark}
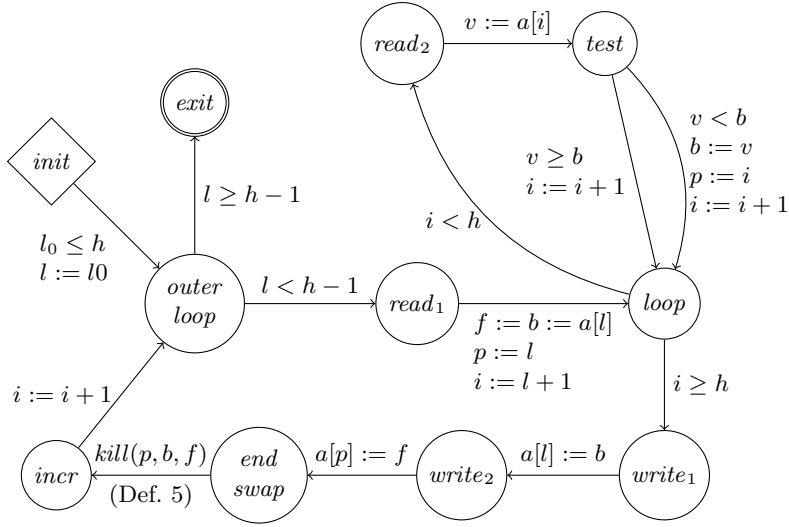
\begin{figure}
{\small
\begin{tikzpicture}[->, node distance=8.2em]
\node[state,initial by diamond] (init) {$\mathit{init}$};
\node[state,below right of=init, align=center] (outerloop) {$\mathit{outer}$\\$\mathit{loop}$};
\node[state,accepting,above of=outerloop] (exit) {$\mathit{exit}$};
\node[state,right of=outerloop, node distance=9em] (read1) {$\mathit{read}_1$};
\node[state,node distance=10em, right of=read1] (loop) {$\mathit{loop}$};
\node[state,above left of=loop,node distance=15em] (read2) {$\mathit{read}_2$};
\node[state,right of=read2] (test) {$\mathit{test}$};
\node[state,below of=loop, node distance=7em] (write1) {$\mathit{write}_1$};
\node[state,left of=write1] (write2) {$\mathit{write}_2$};
\node[state,left of=write2, align=center] (endswap) {$\mathit{end}$\\$\mathit{swap}$};
\node[state,left of=endswap] (incr) {$\mathit{incr}$};
\path (init) edge node[below left,align=left]{$l_0 \leq h$ \\ $l:=l0$} (outerloop);
\path (outerloop) edge node[above]{$l < h-1$} (read1);
\path (outerloop) edge node[right]{$l \geq h-1$} (exit);
\path (read1) edge node[align=left,below]{$f := b := a[l]$ \\ $p:=l$ \\ $i:=l+1$} (loop);
\path (loop) edge[bend left] node[left]{$i < h$} (read2);
\path (read2) edge node[above]{$v := a[i]$} (test);
\path (test) edge[bend left] node[right,align=left]{$v < b$ \\ $b := v$ \\ $p:=i$ \\ $i := i+1$} (loop);
\path (test) edge node[left,align=left]{$v \geq b$ \\ $i := i+1$} (loop);
\path (loop) edge node[right]{$i \geq h$} (write1);
\path (write1) edge node[above]{$a[l] := b$} (write2);
\path (write2) edge node[above]{$a[p] := f$} (endswap);
\path (endswap) edge node[above]{$\mathit{kill}(p,b,f)$}
   node[below]{(Def.~\ref{def:kill})}
   (incr);
\path (incr) edge node[left]{$i:=i+1$} (outerloop);
\end{tikzpicture}}

\caption{Selection sort}\label{fig:selection_sort}
\end{figure}

\begin{example}[Selection sort]\label{ex:selection_sort_sortedness}
Selection sort finds the least element in $a[l \dots h-1]$ (using Prog.~\ref{prog:find_minimum} as its inner loop) and swaps it with $a[l]$, then sorts $a[l+1,h-1]$. At the end, $a[l_0\dots h-1]$ is sorted, where $l_0$ is the initial value of~$l$.

\lstinputlisting[caption={Selection sort},label=prog:selection_sort]{examples/selection_sort/selection_sort.c}

Using the control points from Fig.~\ref{fig:selection_sort}, and the rules for the read (Def.~\ref{def:read2}) and write (Def.~\ref{def:write2}) statements, we write the abstract forward semantics of this program as a system of Horn clauses (Listing~\ref{Horn:selection_sort_sortedness}).

We wish to prove that, at the end, $a[l_0,h-1]$ is sorted: at the $\mathit{exit}$ node,
\begin{equation}
\forall l_0 \leq k_1 < k_2 < h~~ a[k_1] \leq a[k_2]
\end{equation}

This is expressed as the final condition
\begin{align}
\begin{aligned}
\forall l_0,h,k_1,a_{k_1},k_2,a_{k_2}~
  l_0 \leq k < k_2 < h \\ \land \mathit{exit}(l_0, h, k_1, a_{k_1}, k_2, a_{k_2})
  \implies a_{k_1} \leq a_{k_2}
\end{aligned}
\end{align}

\soft{Spacer} solves the resulting system of Horn clauses in 8~min.
\end{example}

We have thus, fully automatically, proved that the output of selection sort is truly sorted (in Example~\ref{ex:selection_sort_multiset} we shall see how to prove that the multiset of elements in the output is the same as in the input).

One may be concerned about an analysis time of 6 minutes for a 17-line program.
In our experience, the average undergraduate student taking a course in program verification and asked to provide inductive invariants (in the Floyd-Hoare sense; say, as annotations for a tool such as Frama-C) to prove that property takes longer time to provide them.
In particular, this invariant for the outer loop is somewhat non trivial:
\begin{equation}
\forall k_1,k_2~ l_0 \leq k_1 < l \land k_1 \leq k_2 < h
\implies a[k_1] \leq a[k_2]
\end{equation}
This invariant can be expressed in our system of Horn clauses as:
\begin{align}
\begin{aligned}
\forall l_0,l,h,k_1,a_{k_1},k_2,a_{k_2} \in \ZZ~
  l_0 \leq k_1 < l \land k_1 \leq k_2 < h \\ \land
  \mathit{outerloop}(l_0,l,h,k_1,a_{k_1},k_2,a_{k_2})
  \implies a_{k_1} \leq a_{k_2}
\end{aligned}\label{formula:outerloop_invariant}
\end{align}
If this invariant is added to the problem as an additional query to prove, \soft{Spacer} solves the problem in 1 second!
It could seem counter-intuitive that a solver would take less time to solve a problem with an additional constraint; but this constraint expresses an invariant necessary to prove the solution, and thus nudges the solver towards the solution.

Our approach is therefore flexible: if a solver fails to prove the desired property on its own, it is possible to help it by providing partial invariants. This is a less tedious approach than having to provide full invariants at every loop header, as common in assisted Floyd-Hoare proofs.

\section{Sets and multisets}
Our abstraction for maps may be used to abstract (multi)sets.

\subsection{Simple sets and multisets}\label{sec:simple_sets}
Many programming languages provide libraries for computing over sets or multisets of elements. One should reason on programs using these libraries by using the set-theoretic, high-level specification of their interface, as opposed to internal implementation details.

Remark, again, that we have made no assumption on the set of indices $\iota$ (except, occasionally, that is endowed with a total order, but that assumption may be dispensed from).
A subset of $\iota$ is just a map from $\iota$ to the Booleans, a multiset a map from $\iota$ to the natural numbers.
Testing the membership of one item $k \in \iota$ therefore just amounts to an array read~$a[k]$, forcing membership or non-membership just amounts to a write.

A single (multi)set $a$ is abstracted as a set of pairs $(k,a[k])$. If one has several (multi)sets $a,b,c$, one may either abstract them with separate indices $(i,a[i],j,a[j],k,a[k])$, or with a common index $(k,a[k],b[k],c[k])$. This last option is less expressive, but simpler, and is often sufficient.

\begin{definition}[(Multi)set union]
The operation $a:=\mathit{union}(b,c)$ is abstracted as:
\begin{align}
\forall \vx \in \chi~ \forall k \in \iota~
  \abstr{I}_1(\vx, k, a_k, b_k, c_k) \implies
  \abstr{I}_2(\vx, k, b_k \lor c_k, b_k, c_k)
\end{align}
(For multiset, replace $\lor$ by~$+$.)
\end{definition}

\begin{definition}[Set intersection]
The operation $a:=\mathit{intersection}(b,c)$ is abstracted as:
\begin{align}
\forall \vx \in \chi~ \forall k \in \iota~
  \abstr{I}_1(\vx, k, a_k, b_k, c_k) \implies
  \abstr{I}_2(\vx, k, b_k \land c_k, b_k, c_k)
\end{align}
\end{definition}

If operations such as ``get the (min/max)imal element'' are to be abstracted precisely, then one can enrich the abstraction by adding tracking variables $l$ and $h$ for the minimal and maximal elements, and updating them accordingly.
In the case of sets of integers, such tracking variables may be used to implement the ``for each'' iterator: iterate $i$ from $l$ to $h$ and test whether $i$ is in the set.

\subsection{Multiset of elements in an array}
\label{sec:multiset_of_elements}
In Example~\ref{ex:selection_sort_sortedness}, we showed how to prove that the output of selection sort is sorted. This is not enough for functional correctness: we also have to prove that the output is a permutation of the input, or, equivalently, that the multiset of elements in the output array is the same as that in the input array.

Let us remark that it is easy to keep track, in an auxiliary map, of the number $\hash a(x)$ of elements of value $x$ in the array $a[]$.
Only write accesses to $a[]$ have an influence on $\hash a$: a write $a[i]:=v$ is replaced by a sequence:
\begin{equation}\label{formula:array_multiset_write_sequence}
\hash a(a[i]) := \hash a(a[i])-1;~ a[i]:=v;~ \hash a(v) := \hash a(v)+1
\end{equation}
(that is, in addition to the array write, the count of elements for the value that gets overwritten is decremented, and the count of elements for the new value is incremented).

This auxiliary map $\hash a$ can itself be abstracted using our approach!
Let us now see how to implement this in our abstract forward semantics expressed using Horn clauses.
We enrich our Galois connection (Def.~\ref{def:Galois1}) as follows:
\begin{definition}\label{def:Galois1count}
The \emph{concretization} of $\abstr{I} \subseteq \chi \times (\iota \times \beta) \times (\beta \times \NN)$ is
\begin{multline}
\concretization[\hash]{\abstr{I}} = \Big\{ (\vx,a) \mid
  \forall i\in\iota~ \forall v \in \beta~\\
  \big(\vx,(i,a[i]),
  (v,\card \{ j \in \iota \mid a[j] = v \})\big) \in \abstr{I} \Big\}
\end{multline}
where $\card X$ denotes the number of elements in the set~$X$.

The \emph{abstraction} of $I \subseteq \chi \times \arraytype{\iota}{\beta}$ is
\begin{multline}
\abstraction[\hash]{I} = \Big\{ \big(\vx,(i,a[i]),(v,\card \{ j \in \iota \mid a[j] = v \})\big)\\ ~\Big|~ x \in \chi, i \in \iota \Big\}
\end{multline}
\end{definition}

\begin{theorem}
$\alpha_{\hash}$ and $\gamma_{\hash}$ form a Galois connection
\begin{equation*}
 \parts{\chi \times \arraytype{\iota}{\beta}}
 \galois{\alpha_{\hash}}{\gamma_{\hash}}
  \parts{\chi \times (\iota \times \beta) \times (\beta \times \NN)}.
\end{equation*}
\end{theorem}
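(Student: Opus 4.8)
The plan is to establish that $\alpha_{\hash}$ and $\gamma_{\hash}$ form a Galois connection by verifying the defining adjunction property directly, mirroring the structure of the earlier Galois-connection theorems in the paper. Concretely, I would show that for every $I \subseteq \chi \times \arraytype{\iota}{\beta}$ and every $\abstr{I} \subseteq \chi \times (\iota \times \beta) \times (\beta \times \NN)$, the equivalence $\abstraction[\hash]{I} \subseteq \abstr{I} \iff I \subseteq \concretization[\hash]{\abstr{I}}$ holds. Since both $\alpha_{\hash}$ and $\gamma_{\hash}$ are manifestly monotone with respect to inclusion (each is defined by a set-builder that only grows as its argument grows), proving this single adjunction is enough; the two standard consequences ($I \subseteq \gamma_{\hash}(\alpha_{\hash}(I))$ and $\alpha_{\hash}(\gamma_{\hash}(\abstr{I})) \subseteq \abstr{I}$) then follow for free.

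First I would unfold the definitions on each side of the desired equivalence. An element of $\abstraction[\hash]{I}$ has the shape $\big(\vx,(i,a[i]),(v,\card\{j\in\iota \mid a[j]=v\})\big)$ for some $(\vx,a)\in I$, $i\in\iota$, and an arbitrary $v\in\beta$. The inclusion $\abstraction[\hash]{I}\subseteq\abstr{I}$ therefore says: for every $(\vx,a)\in I$, every $i\in\iota$, and every $v\in\beta$, the tuple $\big(\vx,(i,a[i]),(v,\card\{j\mid a[j]=v\})\big)$ lies in $\abstr{I}$. On the other side, $(\vx,a)\in\concretization[\hash]{\abstr{I}}$ says precisely that for all $i\in\iota$ and all $v\in\beta$, that same tuple lies in $\abstr{I}$. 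Hence $I\subseteq\concretization[\hash]{\abstr{I}}$ quantifies over $(\vx,a)\in I$ the very condition that $\abstraction[\hash]{I}\subseteq\abstr{I}$ quantifies over $(\vx,a)\in I$. The two statements are thus literally the same after reordering quantifiers, and the equivalence is immediate.

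The only subtlety worth spelling out is that the multiplicity component $\card\{j\in\iota \mid a[j]=v\}$ is a deterministic function of the pair $(a,v)$: for a fixed concrete state $(\vx,a)$ the value $v$ ranges freely over $\beta$, but once $v$ is chosen the count is uniquely determined, so no genuine universal quantification is lost in passing between $\alpha_{\hash}$ and $\gamma_{\hash}$. This is exactly what makes the abstraction faithful and the set-builders on both sides match up element-for-element. I would note in passing that the count is well-defined as an element of $\NN$ precisely when each fibre $\{j\in\iota \mid a[j]=v\}$ is finite; in the intended setting ($\iota$ a finite index range such as $0\dots n-1$) this holds automatically, and I would either restrict to that case or tacitly allow $\NN\cup\{\infty\}$ as the codomain for the cardinality.

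The main obstacle is essentially bookkeeping rather than mathematical depth: one must keep the three-way product structure $\chi \times (\iota\times\beta)\times(\beta\times\NN)$ straight and make sure the free variable $v$ and the derived count travel together correctly when comparing the two set-builder expressions. Once the quantifier alignment is made explicit, the proof reduces to the same routine verification as the earlier Galois-connection theorems, so I expect it to be short; indeed the natural write-up is simply to invoke the composition-of-Galois-connections principle together with the already-established connection of Def.~\ref{def:Galois1}, observing that $\gamma_{\hash}$ composes the per-cell concretization with the deterministic counting map.
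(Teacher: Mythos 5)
Your proposal is correct: the paper actually states this theorem without any proof (as it does for the other Galois-connection theorems, treating them as routine unfoldings of the definitions), and your direct verification of the adjunction $\abstraction[\hash]{I} \subseteq \abstr{I} \iff I \subseteq \concretization[\hash]{\abstr{I}}$ — observing that both inclusions unfold to the same universally quantified membership condition, with the count $\card\{j\in\iota \mid a[j]=v\}$ a deterministic function of $(a,v)$ — is exactly the intended argument, and your finiteness caveat about the cardinality lying in $\NN$ is a legitimate point the paper glosses over. The only weak spot is the closing suggestion that one could instead invoke composition of Galois connections: $\gamma_{\hash}$ does not obviously factor through an intermediate abstract domain (the construction is closer to instrumenting the state with the derived map $\hash a$ than to composing two connections), so that alternative would need genuine development, but since your main proof is complete it does not affect correctness.
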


The Horn rules for array reads and for scalar operations are the same as those for our first abstraction, except that we carry over the extra two components identically.

\begin{definition}[Read statement]\label{def:read1_count}
With the same notations in Def.~\ref{def:read1}:
{\small\begin{align}
\begin{aligned}
\forall \vx \in \chi~ \forall i \in \iota~ \forall v \in \beta~
  \forall k \in \iota~ \forall a_k,z \in \beta~ \forall a_{\hash z} \in \NN\\
  k \neq i \land \abstr{I}_1\big((\vx,i),(k,a_k),(z,a_{\hash z})\big) \\ \land
                 \abstr{I}_1\big((\vx,i),(i,v),(z,a_{\hash z})\big) \implies
  \abstr{I}_2(\vx,v,i,k,a_k)
\end{aligned}\rulespacing
\begin{aligned}
\forall \vx \in \chi~ \forall i \in \iota~ \forall v \in \beta \\
  \abstr{I}_1\big((\vx,i),(i,v),(z,a_{\hash z})\big) \implies
  \abstr{I}_2\big((\vx,v,i),(i,v),(z,a_{\hash z})\big)
\end{aligned}
\end{align}}
\end{definition}

\begin{lemma}
The abstract forward semantics of the read statement (Def.~\ref{def:read1_count}) is a sound abstraction of the concrete semantics given in Def.~\ref{def:read1}.
\end{lemma}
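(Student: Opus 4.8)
The plan is to mirror the soundness proof for the one-cell read statement (Def.~\ref{def:read1}), carrying the extra count component along unchanged. The key observation is that an array read $v := a[i]$ does not modify the array $a$, so the multiset count map $\hash a$ is invariant under this statement; hence the third component $(z, a_{\hash z})$ simply passes through identically, exactly as it does in the first rule of Def.~\ref{def:read1_count}. First I would fix a concrete state $(\vx, i, a) \in \concretization[\hash]{\abstr{I}_1}$, which by definition means that for all $i' \in \iota$ and all $v \in \beta$ we have $\big(\vx, (i', a[i']), (v, \card\{ j \mid a[j] = v \})\big) \in \abstr{I}_1$. I would then take the concrete transition $(\vx, i, a) \xrightarrow{v:=a[i]}_c (\vx, i, v, a)$ with $v = a[i]$, noting that $a$ (and therefore every count $\card\{ j \mid a[j] = v \}$) is unchanged.

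Next I would show $(\vx, i, v, a) \in \concretization[\hash]{\abstr{I}_{2-}}$ by a case split on whether the distinguished index $k$ equals $i$, following the two rules. For $k \neq i$: I instantiate the hypothesis at index $k$ to get $\big(\vx, (k, a[k]), (z, a_{\hash z})\big) \in \abstr{I}_1$ for the relevant count value, and at index $i$ (using $v = a[i]$) to get $\big(\vx, (i, v), (z, a_{\hash z})\big) \in \abstr{I}_1$; these are precisely the two antecedents of the nonlinear first rule, whose consequent places the required tuple in $\abstr{I}_{2-}$. For $k = i$: the second rule applies directly, since its single antecedent $\big(\vx, (i, v), (z, a_{\hash z})\big) \in \abstr{I}_1$ is again just the instantiation of the hypothesis at $i$, and its consequent preserves the count component $(z, a_{\hash z})$ verbatim. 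In both cases the count component is the same on the antecedent and consequent sides, which is exactly why the read rule needs no modification beyond threading this component through.

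The one point requiring genuine care — and the step I expect to be the main obstacle — is justifying that the count value appearing in the read rules is the \emph{correct} multiset count for the post-state, rather than an arbitrary element of $\beta \times \NN$. Because the read leaves $a$ unchanged, the function $v \mapsto \card\{ j \in \iota \mid a[j] = v \}$ is identical before and after the statement, so the same count tuple $(z, a_{\hash z})$ that was a valid abstraction component in $\abstr{I}_1$ remains valid in $\abstr{I}_{2-}$. I would make this explicit by checking that the quantifier $\forall v \in \beta$ in the concretization $\concretization[\hash]{}$ ranges over the same set of count tuples in pre- and post-state, so that no count obligation is left unmet. Once this invariance of $\hash a$ under reads is stated, both cases reduce to the corresponding cases of the original one-cell read lemma, and soundness follows in the sense of the definition preceding Def.~\ref{def:read1}.
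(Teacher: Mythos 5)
Your proof is correct and is exactly the argument the paper intends: the paper states this lemma \emph{without} proof, treating it as the routine extension of its proof of the one-cell read lemma (Def.~\ref{def:read1}), and your argument is precisely that extension --- the same case split $k=i$ versus $k\neq i$, the same instantiation of the concretization hypothesis at indices $k$ and $i$ (using $v=a[i]$), plus the one genuinely new observation, which you correctly identify and justify, that a read leaves $a$ and hence the count map $z \mapsto \card\{j \in \iota \mid a[j]=z\}$ unchanged, so the pair $(z,a_{\hash z})$ threads through identically. One caveat worth flagging: the consequent of the first rule of Def.~\ref{def:read1_count} is misprinted in the paper as $\abstr{I}_2(\vx,v,i,k,a_k)$, dropping the count component; your reading of it as $\abstr{I}_2\big((\vx,v,i),(k,a_k),(z,a_{\hash z})\big)$ is the intended (and type-correct) one, and your proof is sound under that reading.
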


The abstraction of the write statement is more complicated (see the sequence of instructions in Formula~\ref{formula:array_multiset_write_sequence}). To move by a write operation $a[i]:=v$ from a control point $p_1$ to a control point $p_2$, we need two intermediate control points $p_a$ and $p_b$.

\begin{definition}[Write statement]\label{def:write1_count}
With the same notations in Def.~\ref{def:write1}:

{\small\begin{align*}
\begin{aligned}
\forall \vx \in \chi~ \forall i,k \in \iota~ \forall a_i,a_k,v,z \in \beta~
\forall a_{\hash z}\in \NN~ a_i \neq z  \land \\
\abstr{I}_1\big((\vx, v, i), (k, a_k), (z, a_{\hash z})\big) \land
\abstr{I}_1\big((\vx, v, i), (i, a_i), (z, a_{\hash z})\big) \\ \implies
\abstr{I}_a\big((\vx, v, i), (k, a_k), (z, a_{\hash z})\big)
\end{aligned}\rulespacing
\begin{aligned}
\forall \vx \in \chi~ \forall i,k \in \iota~ \forall a_i,a_k,v \in \beta~
\forall a_{\hash z}\in \NN \\
\abstr{I}_1\big((\vx, v, i), (k, a_k), (a_i, a_{\hash z})\big) \land
\abstr{I}_1\big((\vx, v, i), (i, a_i), (a_i, a_{\hash z})\big) \\ \implies
\abstr{I}_a\big((\vx, v, i), (k, a_k), (a_i, a_{\hash z}-1)\big)
\end{aligned}\rulespacing
\begin{aligned}
\forall \vx \in \chi~ \forall i,k \in \iota~ \forall a_i,a_k,v,z \in \beta~
\forall a_{\hash z}\in \NN \\ v \neq z \land
\abstr{I}_a\big((\vx, v, i), (k, a_k), (z, a_{\hash z})\big) \land
\abstr{I}_a\big((\vx, v, i), (i, a_i), (z, a_{\hash z})\big) \\ \implies
\abstr{I}_b\big((\vx, v, i), (k, a_k), (z, a_{\hash z})\big)
\end{aligned}\rulespacing
\begin{aligned}
\forall \vx \in \chi~ \forall i,k \in \iota~ \forall a_i,a_k,v \in \beta~
\forall a_{\hash z}\in \NN \\
\abstr{I}_a\big((\vx, v, i), (k, a_k), (v, a_{\hash z})\big) \land
\abstr{I}_a\big((\vx, v, i), (i, a_i), (v, a_{\hash z})\big) \\ \implies
\abstr{I}_b\big((\vx, v, i), (k, a_k), (v, a_{\hash z}+1)\big)
\end{aligned}\rulespacing
\begin{aligned}
\forall \vx \in \chi~ \forall i \in \iota~ \forall v \in \beta~
  \forall x \in \iota~ \forall a_k \in \beta~~  i\neq k \land \\
  \abstr{I}_1\big((\vx,i,v),(k,a_k), (z, a_{\hash z})\big)
  \implies
  \abstr{I}_2\big((\vx,v,i),(k,a_k), (z, a_{\hash z})\big)
\end{aligned}\rulespacing
\begin{aligned}
\forall \vx \in \chi~ \forall i \in \iota~ \forall v \in \beta~
  \forall k \in \iota~ \forall a_k \in \beta\\
  \abstr{I}_1\big((\vx,i,v),(i,a_k), (z, a_{\hash z})\big)  \implies
  \abstr{I}_2\big((\vx,v,i),(i,v), (z, a_{\hash z})\big)
\end{aligned}
\end{align*}}
\end{definition}

\begin{lemma}
The abstract forward semantics of the write statement (Def.~\ref{def:write1_count}) is a sound abstraction of the concrete semantics given in Def.~\ref{def:write1}.
\end{lemma}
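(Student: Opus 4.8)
The plan is to verify the soundness criterion for Horn rules recalled before the read/write lemmas, now instantiated with the enriched concretization $\gamma_{\hash}$ (Def.~\ref{def:Galois1count}). Fix $(\vx,a) \in \concretization[\hash]{\abstr{I}_1}$, which by definition means that for every index $k \in \iota$ and every value $z \in \beta$ the tuple $\big(\vx,(k,a[k]),(z,\card\{j\in\iota \mid a[j]=z\})\big)$ belongs to $\abstr{I}_1$. Let the concrete write fire, i.e.\ $(\vx,i,v,a) \rightarrow_c (\vx,i,v,a')$ with $a'=\mathit{store}(a,i,v)$ as in Def.~\ref{def:write1}. The goal is to show that for every $k$ and $z$ the tuple $\big((\vx,v,i),(k,a'[k]),(z,\card\{j\in\iota \mid a'[j]=z\})\big)$ is imposed in $\abstr{I}_2$, since establishing this for all $k,z$ is exactly $(\vx,i,v,a') \in \concretization[\hash]{\abstr{I}_2}$.

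The argument splits into two independent coordinates. For the array cell $(k,a'[k])$ the reasoning is identical to the soundness proof of the plain write (Def.~\ref{def:write1}): if $k \neq i$ then $a'[k]=a[k]$, and if $k=i$ then $a'[k]=v$; these two subcases are discharged by the two array-cell rules of Def.~\ref{def:write1_count}. For the count coordinate the central fact is the counting identity
\begin{equation*}
\card\{j\in\iota \mid a'[j]=z\} = \card\{j\in\iota \mid a[j]=z\} - \delta_{a[i],z} + \delta_{v,z},
\end{equation*}
where $\delta$ equals $1$ when its two arguments coincide and $0$ otherwise. I would prove this by partitioning the indices $j$ into the single index $j=i$, where $a'[i]=v$ whereas $a[i]$ is the overwritten value, and the indices $j\neq i$, where $a'[j]=a[j]$, so that the two arrays yield multisets differing only at the entry $i$.

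I would then trace the required count back along the two intermediate control points $p_a$ and $p_b$, which realise the decrement/write/increment decomposition of Formula~\ref{formula:array_multiset_write_sequence}. The first pair of rules produces at $p_a$ the intermediate count $\card\{j \mid a[j]=z\} - \delta_{a[i],z}$: the case $z\neq a[i]$ leaves the count unchanged and the case $z=a[i]$ subtracts one. Crucially, to decide whether $z=a[i]$ the rules must recover the overwritten value $a[i]$; this is the role of the second, ``diagonal'' antecedent $\abstr{I}_1\big((\vx,v,i),(i,a_i),(z,a_{\hash z})\big)$, available precisely because $(\vx,a)\in\concretization[\hash]{\abstr{I}_1}$ supplies the tuple for the cell $k=i$ (instantiating $a_i=a[i]$). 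Symmetrically, the second pair of rules adds one to the count at $p_b$ when $z=v$ and leaves it untouched otherwise, yielding $\card\{j \mid a[j]=z\} - \delta_{a[i],z} + \delta_{v,z}$, which by the identity equals $\card\{j \mid a'[j]=z\}$. Since the composition of sound abstract transformers is sound, chaining $p_1 \rightarrow p_a \rightarrow p_b \rightarrow p_2$ gives the claim.

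The main obstacle is the count coordinate, and specifically the degenerate case $a[i]=v$: here the decrement and the increment both act on the value $z=a[i]=v$ and must cancel, so that $\card\{j \mid a'[j]=z\}=\card\{j \mid a[j]=z\}$. I would check that the pipeline then applies the decrementing rule at $p_a$ followed by the incrementing rule at $p_b$, with no off-by-one or double counting, and that the guards $a_i\neq z$ and $v\neq z$ on the ``unchanged'' rules exclude exactly the two values where an adjustment is due. The honest bookkeeping is matching the four count rules, through the two intermediate points, to the four sign combinations $(\delta_{a[i],z},\delta_{v,z})\in\{0,1\}^2$ of the identity.
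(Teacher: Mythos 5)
Your argument is essentially correct, but there is nothing in the paper to compare it against: the paper states this lemma, like its read counterpart (Def.~\ref{def:read1_count}), with \emph{no} proof at all, so your proposal supplies missing content rather than re-deriving an existing argument. Its core is right: the identity $\card\{j\in\iota \mid a'[j]=z\} = \card\{j\in\iota \mid a[j]=z\} - \delta_{a[i],z} + \delta_{v,z}$ for $a'=\mathit{store}(a,i,v)$ (with $\delta$ your indicator), the use of the diagonal antecedent $\abstr{I}_1\big((\vx,v,i),(i,a_i),(z,a_{\hash z})\big)$ to recover the overwritten value $a[i]$ --- noting that both antecedents of each rule can be instantiated with the \emph{same} count, which is what makes the rules fire --- the matching of the four count rules at $p_a$ and $p_b$ against the sign pattern $(\delta_{a[i],z},\delta_{v,z})$, and the cancellation check in the degenerate case $a[i]=v$.

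Two remarks. First, your chaining $p_1 \rightarrow p_a \rightarrow p_b \rightarrow p_2$ silently corrects the paper: as printed, the last two rules of Def.~\ref{def:write1_count} (the cell-update rules) have antecedent $\abstr{I}_1$, not $\abstr{I}_b$. Read literally, nothing connects $\abstr{I}_b$ to $\abstr{I}_2$: the predicates $\abstr{I}_a,\abstr{I}_b$ are dead ends, the tuples imposed in $\abstr{I}_2$ carry the \emph{old} counts, and whenever $a[i]\neq v$ the tuple with count $\card\{j \mid a'[j]=z\}$ for $z\in\{a[i],v\}$ is never imposed, so the post-state escapes $\concretization[\hash]{\abstr{I}_{2-}}$ and the rules as printed are not sound. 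Your reading is clearly the intended one --- it is the only one consistent with the announced pipeline through two intermediate control points --- but a complete proof should state this correction of the definition explicitly rather than assume it. Second, the appeal to ``composition of sound abstract transformers'' is not quite the right justification: the intermediate points $p_a,p_b$ correspond to no concrete state (their count component is deliberately out of sync with their cell component), so the individual stages are not sound in the paper's sense, which is defined relative to a concrete step $\rightarrow_c$. What licenses the chaining is monotonicity of Horn rules: the tuples imposed in $\abstr{I}_a$ belong to $\abstr{I}_a$ in every solution of the system, hence the $p_a\rightarrow p_b$ rules fire on them, and likewise into $\abstr{I}_2$. Your explicit tracing of the imposed tuples through $p_a$ and $p_b$ is exactly this argument, so nothing is missing; only the slogan should be replaced.
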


If we want to compare the multiset of the contents of an array $a$ at the end of a procedure to its contents at the beginning of the procedure, one needs to keep a copy of the old multiset.
It is common that the property sought is a relation between the number of occurrences $\hash a(z)$ of an element $z$ in the output array $a$ and its number of occurrences $\hash a_0(z)$ in the input array $a^0$. In the above formulas, one may therefore replace the pair $(z, a_{\hash z})$ by $(z, a_{\hash z}, a^0_{\hash z})$, with $a^0_{\hash z}$ always propagated identically.

\begin{example}\label{ex:selection_sort_multiset}
Consider again selection sort (Program~\ref{prog:selection_sort}). We use the abstract semantics for read (Def.~\ref{def:read1_count}) and write (Def.~\ref{def:write1_count}), with an additional component $a^0_{\hash z}$ for tracking the original number of values $z$ in the array~$a$ (Listing~\ref{Horn:selection_sort_multiset}).

We specify the final property as the query
\begin{align}
\begin{aligned}
\forall l_0,h,k,a_k,z,a_{\hash z},a^0_{\hash z}~
\mathit{exit}(l_0,h,k,a_k,z,a_{\hash z},a^0_{\hash z}) \\
\implies a_{\hash z} = a^0_{\hash z}
\end{aligned}
\end{align}
\end{example}

\section{Counterexamples}
\label{sec:counterexample}
Solvers for Horn clauses based on counterexample-based abstraction refinement (CEGAR) construct a sequence of increasingly more precise abstractions of the Horn clause problem. At every step, they search for a \emph{counterexample} to the satisfiability of the Horn clauses: that is, a tree unfolding of the Horn clauses and matching assignments to the variables in the clauses, rooted at a violated query. If such a counterexample is found, the solver answers ``unsatisfiable'': this counterexample is a witness to the absence of solution of the system. If it is found not to exist, the solver examines its proof of nonexistence for clues how to refine the abstraction, typically by generating \emph{tree interpolants}, and the process goes on~\cite{DBLP:conf/cav/RummerHK13}.

In our case, a counterexample provided by the Horn solver proves the nonexistence of an inductive invariant capable of proving the desired properties \emph{in our abstraction}: it means that either our abstraction is too coarse, either the desired safety property is wrong because there exists a concrete counterexample.

\begin{figure}
{\small%
$
\infer[{v_2:=a[2]} \text{~(\ref{rule:counterexample_r4a})}]{\mathit{end}(0,1,1,0)}{
  \infer{r_4(0,1,0)}{
    \infer[{v1:=a[1]} \text{~(\ref{rule:counterexample_r3b})}]{r_3(1,0)}{
      \infer[{v_2:=a[2]} \text{~(\ref{rule:counterexample_r2a})}]{\mathit{cmp}(0,1,1,0)}{
        \infer[{v1:=a[1]} \text{~(\ref{rule:counterexample_r1b})}]{r_2(0,1,0)}{
          \infer{r_1(1,0)}{}
        } &
        \infer*{r_2(0,2,0)}{}
      }
    }
  } &
  \infer*{r_4(0,2,0)}{}
}
$
}
\caption{Counterexample unfolding of Ex~\ref{ex:counterexample} leading to $\mathit{end}(v_1.v_2,1,a_1)$ with $v_1 \leq v_2$, violating the condition}
\label{fig:counterexample_unfolding}
\end{figure}

\begin{example}\label{ex:counterexample}
The assertion at the end of this program is obviously valid (and may be established by, e.g., global value numbering):
\begin{lstlisting}
/* r1 */ v1 = a[1]; /* r2 */ v2 = a[2];
/* cmp */ assume(v1 == v2); /* kill(v1, v2) */
/* r3 */ v1 = a[1]; /* r4 */ v2 = a[2];
/* end */ assert(v1 == v2);
\end{lstlisting}

The system of Horn rules produced by Algorithm~\ref{algo:construct_Horn} is:
{\small\begin{align}
\forall k,a_k \in \ZZ~ r_1(k, a_k) \rulespacing
\begin{aligned}
\forall v_1,k,a_k \in \ZZ~
  r_1(k,a_k) \land r_1(1,v_1) \land k\neq 1\\
  \implies r_2(v_1, k, a_k)
\end{aligned}\rulespacing
\forall v_1~ r_1(1,v_1) \implies r_2(v_1,1,v_2) \label{rule:counterexample_r1b}\\
\begin{aligned}
\forall v_1,v_2,k,a_k \in \ZZ~
  r_2(v_1,k,a_k) \land r_2(v_1,2,v_2) \land k\neq 2\\ 
  \implies \mathit{cmp}(v_1,v_2,k,a_k)
\end{aligned} \label{rule:counterexample_r2a}\rulespacing
\forall v_1,v_2 \in \ZZ~ r_2(v_1,2,v_2) \implies \mathit{cmp}(v_1,v_2,2,v_2) \rulespacing
\forall v,k,a_k \in \ZZ~
  \mathit{cmp}(v,v,k,a_k) \implies r_3(k,a_k)  \label{rule:counterexample_cmp} \rulespacing
\begin{aligned}
\forall v_1,k,a_k \in \ZZ~
  r_3(k,a_k) \land r_3(1,v_1) \land k\neq 1\\
 \implies  r_4(v_1, k, a_k)
\end{aligned} \rulespacing
\forall v_1~ r_3(1,v_1) \implies r_4(v_1,1,v_1)  \label{rule:counterexample_r3b} \rulespacing
\begin{aligned}
\forall v_1,v_2,k,a_k \in \ZZ~
  r_4(v_1,k,a_k) \land r_4(v_1,2,v_2) \land k\neq 2\\
  \implies \mathit{end}(v_1,v_2,k,a_k)
\end{aligned}  \label{rule:counterexample_r4a} \rulespacing
\forall v_1,v_2 \in \ZZ~ r_4(v_1,2,v_2) \implies \mathit{end}(v_1,v_2,2,v_2) \rulespacing
\forall v_1,v_2,a_1 \in \ZZ~ \mathit{end}(v_1,v_2,1,a_1) \implies v_1=v_2
\end{align}}

This system is too abstract to prove the desired property; an abstract counterexample exists (Fig.~\ref{fig:counterexample_unfolding}).
\end{example}

Note that all our Horn rules are of the form%
\footnote{In this explanation we use a single $(k,a_k)$, as in Sec.~\ref{sec:abstraction1}, but the same carries to the use of multiple indices $(k_1,a_{k_1},k_2,a_{k_2})$ etc.}
\begin{equation}
\forall \dots~
  \abstr{I}_1\big(\dots, (k,a_k)\big) \land \dots \land \abstr{I}_1(\dots,\dots) \implies
  \abstr{I}_2\big(\dots, (k,a_k)\big) 
\end{equation}
thus, in the unfolding, the children of a node associated with a control point $p_2$ are all associated to the same control point $p_1$.
Furthermore, the rule associated to a node corresponds to one statement transitioning from $p_1$ to $p_2$.
Any branch from a leaf to the root of the unfolding thus corresponds to a sequence of statements from the original program.
It is, in the CEGAR view, an ``abstract counterexample trace'' from an initialization control point to a possible violation.
By conjoining the concrete semantics associated to each step we obtain a first-order formula over arithmetic and arrays.
If this formula is satisfiable, we have a concrete counterexample.

\begin{example*}[Ex.~\ref{ex:counterexample}, continued]
From the unfolding in Figure~\ref{fig:counterexample_unfolding}, one obtains the sequence of instructions to be tested for a concrete counterexample.
(On this example with no tests or loops, there is only one sequence from the start to the end of the program, but in general this says which test branches are taken.)
\end{example*}

\begin{algo}\label{algo:counterexample}
\begin{enumerate}
\item Construct the Horn clause system using Algorithm~\ref{algo:construct_Horn}.
\item Run the Horn clause solver. It returns ``satisfiable'', report ``proved''.
\item If it returns a counterexample unfolding, select a branch, collect the corresponding concrete transition relations and construct a trace satisfiability problem in first-order arithmetic plus arrays.
\item Run a satisfiability modulo theory (SMT) solver on this problem. If it returns ``satisfiable'', report ``violated''.
\item (Optional refinement step) Examine the array axioms in use in the unsatisfiability proof provided by the SMT-solver; increase the precision of the abstraction of these arrays by increasing the number of indices (e.g. move from a single index $k$ to two indices $k_1 \leq k_2$), and go back to step~1.
\end{enumerate}

Any branch in the unfolding could work, but we propose selecting the leftmost one according to the order in which we listed the antecedents of the Horn rules in this article.
\end{algo}

The values for the variables in the counterexample unfolding provided by the Horn clause solver may be used as hints for finding the values of the variables in the SMT problem.

\section{Related work}
\label{sec:related}

\subsection{Abstract interpretation}\label{sec:related_absint}
\paragraph{Smashing}
The simplest abstraction for an array is to ``smash'' all cells into a single one ---
this amounts to removing the $k$ component from our first Galois connection (Def.~\ref{def:Galois1}).
The weakness of that approach is that all writes are treated as ``may writes'': $a[i]:=x$ \emph{adds} the value $x$ to the set of values admissible for the array $a$, but there is no way to \emph{remove} any value from that set.
Such an approach thus cannot treat initialization loops (e.g. Program~\ref{prog:array_fill1}) precisely: it cannot prove that the old values have been erased.

\paragraph{Exploding}
At the other extreme, for an array of statically known finite length $N$ (which is common in embedded safety-critical software), one can distinguish all cells $a[0], \dots, a[N-1]$ and treat them as separate variables $a_0, \dots, a_{N-1}$;
e.g. a read \lstinline|x = a[i];| is treated as a
{\small
\begin{lstlisting}[mathescape=true]
switch (i) {
  case 0: x = a$_0$; break;
  $\dots$
  case $N-1$: x=a$_{N-1}$; break;
}
\end{lstlisting}}
This is a good solution when $N$ is small, but a terrible one when $N$ is large:
\begin{inparaenum}
\item many analysis approaches scale poorly with the number of active variables
\item an initialization loop will have to be unrolled $N$ times to show it initializes all cells.
\end{inparaenum}

Both these approaches have been used with success in the Astr\'ee static analyzer \cite{BlanchetCousotEtAl02-NJ,BlanchetCousotEtAl_PLDI03}, where the ``smashed'' cell or the individual array cells are typically further abstracted by intervals and other non-relational analyses.

\paragraph{Slices}
More sophisticated analyses \cite{GopanRS05,HalbwachsP08,peron:tel-00623697,perrelle:tel-00973892,CousotCL11} distinguish \emph{slices}
or \emph{segments} in the array, their boundaries depending on the index variables.
For instance, in array initialization (Program~\ref{prog:array_fill1}), such an analysis will tend to distinguish the area already initialized (indices $<i$) and the area yet to be initialized (indices $\geq i$).
In the simplest case, each slice is ``smashed'' into a single value, but more refined analyses express relationships between slices.
Since the slices are segments $[a,b]$ of indices, these analyses generalize poorly to multidimensional arrays.
Also, there is often a combinatorial explosion in analyzing how array slices may or may not overlap.

To our best knowledge, all these approaches factor through our Galois connections $\galois{\alpha}{\gamma}$, $\galois{\alpha_{2\leq}}{\gamma_{2\leq}}$ or combinations thereof: that is, their abstraction can be expressed as a composition of our abstraction and further abstraction
--- even though our implementation of the abstract transfer functions is completely different from theirs.
Our approach, however, separates the concerns of
\begin{inparaenum}[i)]
\item abstracting array problems to array-less problems
\item abstracting the relationships between different cells and indices.
\end{inparaenum}

\subsection{Array removal by program transformation}
\label{sec:abstraction_weak_transfo}
\citet{Monniaux_Alberti_SAS2015} recently published a method for analyzing array programs by transforming them into array-free programs.
They use the same Galois connections ($\galois{\alpha}{\gamma}$, $\galois{\alpha_{2\leq}}{\gamma_{2\leq}}$) as us, but they implement the abstract transfer functions differently.
While we transform the program into a system of non-linear Horn clauses, they transform it into another program without arrays.

\begin{definition}[Array analysis by transformation to an array-free program]\label{def:weak1_transfo}
Non-array operations are left unchanged. A read \lstinline|v=a[i];| is transformed into
\begin{lstlisting}
if (i==k) x=ak; else havoc(x);
\end{lstlisting} and a write \lstinline|a[i]=v;| is transformed into
\begin{lstlisting}
if (i==k) ak=x;
\end{lstlisting}
\lstinline|havoc(x)| sets \lstinline|x| to a nondeterministic value.
\end{definition}

Note that the ``flat'' encoding of programs into Horn clauses yields a system of linear clauses. Thus, chaining \citet{Monniaux_Alberti_SAS2015} and a tool for turning scalar program analysis problems into a system of Horn clauses would yield linear clauses: the resulting encoding would thus be different from the one produced by our approach.
The reason is that their abstraction is actually weaker than our abstraction. Let us see the Horn clauses corresponding to the encoding of the ``read'' operation in their approach.

\begin{definition}[Read statement, weakened]\label{def:read1_weakened}
With the same notations as in Def.~\ref{def:read1}, another forward abstract semantics for  \lstinline|v=a[i];| is given by the Horn clauses:
\begin{align}
\begin{aligned}
\forall \vx \in \chi~ \forall i \in \iota~ \forall v \in \alpha~
  \forall k \in \iota~ \forall a_k \in \alpha\\
  i \neq k \land \abstr{I}_1(\vx,i,k,a_k) \implies
  \abstr{I}_2(\vx,v,i,k,a_k)
\end{aligned}\rulespacing
\begin{aligned}
\forall \vx \in \chi~ \forall i \in \iota~ \forall v \in \alpha~
  \forall k \in \iota~ \forall a_k \in \alpha\\
  \abstr{I}_1(\vx,i,i,v) \implies
  \abstr{I}_2(\vx,v,i,i,v)
\end{aligned}
\end{align}
\end{definition}

\begin{lemma}
This forward abstract semantics is sound and equivalent to the forward semantics of the transformation of a read statement according to Def.~\ref{def:weak1_transfo}.
\end{lemma}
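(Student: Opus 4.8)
The plan is to establish the two assertions separately. First I would observe that soundness is essentially inherited from the soundness of the exact read abstraction of Definition~\ref{def:read1}: the second clause of Definition~\ref{def:read1_weakened} is literally Rule~\ref{rule:read1_same}, while its first clause is obtained from the nonlinear Rule~\ref{rule:read1_different} by dropping the conjunct $\abstr{I}_1((\vx,i),(i,a_i))$ and letting the read value $v$ range freely instead of being forced to equal $a_i$. Consequently, for every instantiation of the antecedent the weakened rule imposes at least all the elements the exact rule imposes (take $v=a_i$ in the weakened conclusion), so the set $\abstr{I}_{2-}$ produced by Definition~\ref{def:read1_weakened} contains the set produced by Definition~\ref{def:read1}. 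Since $\gamma$ is monotone, $\concretization{\abstr{I}_{2-}}$ for the weakened rules contains the already-sound concretization for the exact rules, and therefore still over-approximates every concrete successor, which is exactly the soundness condition.

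To keep the argument self-contained, I would also give the direct check. Take $(\vx,i,a)\in\concretization{\abstr{I}_1}$, that is $\forall k~\abstr{I}_1(\vx,i,k,a[k])$, and perform the concrete read $v:=a[i]$, which leaves the array unchanged and sets $v=a[i]$. To show $(\vx,v,i,a)\in\concretization{\abstr{I}_{2-}}$ I would verify $\abstr{I}_2(\vx,v,i,k,a[k])$ for every $k$: for $k=i$ the value is $a[i]=v$, so $\abstr{I}_2(\vx,v,i,i,v)$ follows from the second clause applied to $\abstr{I}_1(\vx,i,i,v)$; for $k\neq i$ the first clause, instantiated at this particular $v$, yields $\abstr{I}_2(\vx,v,i,k,a[k])$ from $\abstr{I}_1(\vx,i,k,a[k])$.

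For the equivalence with Definition~\ref{def:weak1_transfo}, I would compute the forward (image) semantics of the scalarised statement \lstinline|if (i==k) x=ak; else havoc(x);|, in which \lstinline|x| is the read target (written $v$ in the Horn clauses) and \lstinline|ak| is the scalar standing for $a[k]$, acting on the tuple of program-plus-tracking variables $(\vx,i,k,a_k)$; a read leaves the tracked cell $(k,a_k)$ unchanged. The induced transition relation sends $(\vx,i,k,a_k)$ to $(\vx,v,i,k,a_k)$ subject to the constraint $i=k \Rightarrow v=a_k$, and its image over a set $\abstr{I}_1$ splits on the guard: on the branch $i\neq k$ the variable $v$ is havoced, so the image contains $(\vx,v,i,k,a_k)$ for every $v$ whenever $(\vx,i,k,a_k)\in\abstr{I}_1$, which is precisely the first Horn clause; on the branch $i=k$ the assignment is deterministic, so the image contains $(\vx,a_k,i,i,a_k)$ whenever $(\vx,i,i,a_k)\in\abstr{I}_1$, which after renaming $a_k$ to $v$ is precisely the second Horn clause. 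Since the two clauses cover the disjoint and exhaustive cases $i\neq k$ and $i=k$ matching the two branches of the conditional, the set imposed by the Horn clauses and the forward image coincide, in both inclusions.

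The calculations are routine; the delicate points are bookkeeping rather than mathematical depth. I expect the main subtlety to be making the notion of ``forward semantics of the transformed program'' precise enough that the equivalence can be checked in both directions: one must read the \lstinline|havoc(x)| of the $i\neq k$ branch as exactly the free universal quantification over $v$ in the first clause, and confirm that no spurious tuple with $i=k$ escapes the first clause (it is guarded by $i\neq k$) and none with $i\neq k$ escapes the second (it forces the tracked index to equal $i$). Keeping the variable orderings and the unchanged tracking cell $(k,a_k)$ aligned between the two formalisms is the principal source of error to guard against.
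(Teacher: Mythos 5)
Your proof is correct and takes essentially the same route as the paper's: soundness by observing that Definition~\ref{def:read1_weakened} over-approximates Definition~\ref{def:read1} via removal of one conjunct (with $\gamma$'s monotonicity transferring the already-proved soundness), and equivalence by matching the two Horn clauses against the two branches of the transformed statement from Definition~\ref{def:weak1_transfo}. The paper states both facts in two terse sentences; your version merely fills in the details it declares obvious.
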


\begin{proof}
Soundness follows from this definition over-approximating Def.~
\ref{def:read1} (removal of one conjunct). Equivalence to the forward semantics of the transformed read statement is obvious.
\end{proof}

Using this weakened semantics to abstract arrays $a$ and $b$ by quadruplets $(x,a[x],y,b[y])$, the correctness of Ex.~\ref{ex:array_reverse_once1} cannot be proved.
\citet{Monniaux_Alberti_SAS2015} are able to prove the correctness of this program by using a more expensive abstraction, where array $a$ (current working array of the reversal) and array $b$ (original values of $a$) are abstracted using a set $\abstr{I}$ of sextuplets such that $\forall 0 \leq x \leq y < n~ \forall 0 \leq z < n ~ (x,a[x],y,a[y],z,b[z]) \in \abstr{I}$; that is, $a$ is abstracted as in Sec.~\ref{sec:sortedness}.

\citet[Sec.~5.5]{Monniaux_Alberti_SAS2015} are sometimes able to recover the loss of precision induced by their abstractions by applying a form of \emph{quantifier elimination}.
When they have an abstract state $(\vx, k_1, a_{k_1}, k_2, a_{k_2})$, they reason that a state is spurious if there exist $k'_1$ ($k'_1 \neq k_1$, $k'_1 \leq k_2$) such that there is no abstract state $(\vx, k'_1, a_{k'_1}, k_2, a_{k_2})$. In intuitive terms, this means that if $a[k_2] = a_{k_2}$ then there is no way to fill the value at cell $a[k_1]$ --- but since all cells in the array must have a value, this means that $a[k_2] = a_{k_2}$ is impossible.

Thus, one could have a ``filtering'' or \emph{reduction} rule
\begin{align}
\begin{aligned}
  \left(\forall k'_1~ \left(k'_1 \leq k_2 \implies \exists a_{k'_1} \abstr{I}_1(\vx, k'_1, a_{k'_1}, k_2, a_{k_2})\right) \right) \\ \land
  \abstr{I}_1(\vx, k_1, a_{k_1}, k_2, a_{k_2}) \implies
  \abstr{I}_2(\vx, k_1, a_{k_1}, k_2, a_{k_2})
\end{aligned}
\end{align}
which would not change the concretization --- $\concretization[2\leq]{\abstr{I}_2} = \concretization[2\leq]{\abstr{I}_1}$ --- but reduce the abstract state, which may later yield a more precise result (applying the same sound abstract operation to two sets of abstract states with the same concretization may yield two sets of abstract states with different concretizations).

We cannot specify such a \emph{reduction rule} using Horn clauses (because a $\forall$ on the left of $\implies$ is effectively an existential in the prenex form, which is banned).
However, we can easily specify a partial filtering by instantiating the universal quantifier on certain values, thereby obtaining a finite conjunction in the antecedent of the implication.
This is, in essence, what we gain by the use of non-linear Horn clauses.

Thus, using non-linear Horn clauses, we are able to integrate partial reductions in the abstract domain to the fixed-point problem to solve, whereas \citet[Sec.~5.5]{Monniaux_Alberti_SAS2015} had to first solve the full fixed point problem (analysis of the transformed program), then perform reductions.
In general, it is more precise to solve a fixed point problem using a precise operator $f$ than to solve it using an imprecise operator $g \geq f$, then reduce the final result.
We believe therefore that our approach improves in this respect upon that of \citeauthor{Monniaux_Alberti_SAS2015}'s.

Another difficulty they obviously faced was the limitations of the back-end solvers that they could use. The integer acceleration engine \soft{Flata} severely limits the kind of transition relations that can be considered and scales poorly.
The abstract interpreter \soft{ConcurInterproc} can infer disjunctive properties (necessary to distinguish two slices in an array) only if given case splits using observer Boolean variables; but the cost increases greatly (exponentially, in the worst case) with the number of such variables.

\subsection{Predicate abstraction, CEGAR and array interpolants}
There exist a variety of approaches based on counterexample-guided abstraction refinement using interpolants (see also Sec.~\ref{sec:counterexample}).
In a nutshell: let $\big(\tau_i(\vx_i,\vx_{i+1})\big)_{0 \leq i < n}$ be the transition relations associated to a sequence of statements (assignments and guards), where $\vx_i$ is the vector of active program variables after $i$~steps.
It is impossible to reach the end of the sequence from the beginning if and only if this formula is unsatisfiable:
\begin{equation}
\tau_0(\vx_0,\vx_1) \land \dots \land \tau_{n-1}(\vx_{n-1},\vx_n)
\end{equation}
The proof of unsatisfiability of this formula, as obtained from a satisfiability modulo theory (SMT) solver, may be convoluted. For the purpose of inferring useful ``candidate invariants'' on the program, we would prefer ``local'' arguments $I_i$, talking only about the variables at a given step:
\begin{equation}
I_i(\vx_i) \land \tau_0(\vx_i,\vx_{i+1}) \implies I_{i+1}(\vx_{i+1})
\end{equation}
and $I_n = \mathsf{false}$.
Such $I_i$ are known as \emph{Craig interpolants} \cite{DBLP:conf/apn/McMillan05,McMillan06,McMillan11} and are typically obtained by reprocessing the proof of unsatisfiability from the SMT solver.
One difficulty with that approach is that not all interpolants are equally interesting: one seeks interpolants that not only prove that an individual sequence of statements leading to a bad state is infeasible, but that generalize well and can be used in a proof that many sequences of statements leading to a bad state are infeasible, hopefully leading to a proof that no sequence can lead to a bad state.

Generating good interpolants from purely arithmetic problems is already a difficult problem, and generating good universally quantified interpolants on array properties has proved even more challenging \cite{JhalaM07,AlbertiBGRS14,Alberti_Monniaux_SAC-SVT2015}.

\subsection{Acceleration}
It is possible to compute exactly the transitive closure of some transition relations, and thus to summarize some loop exactly. The class of transition relations supported is however restricted.

\citet{BozgaHIKV09} have proposed a method for accelerating certain transition relations involving actions over arrays, which outputs the transitive closure in the form of a \emph{counter automaton}.
Translating the counter automaton into a first-order formula expressing the array properties however results in a loss of precision.

%\citet{AlbertiGS13,AlbertiGS14,AlbertiGS15} proposed using templates for certain classes of invariants.

\section{Conclusion and perspectives}
\label{sec:conclusion}
We have proposed a generic approach to abstract programs and universal properties over arrays (and, more generally, arbitrary maps) by syntactic transformation into a system of Horn clauses without arrays, which is then sent to a solver.
This transformation is powerful enough that it can be used to prove, fully automatically and within minutes, that the output of selection sort is sorted and is a permutation of the input.

While some solvers have difficulties with the kind of Horn systems that we generate, some (e.g. \soft{Spacer}) are capable of solving them quite well.
We have used the stock version of the solvers, without help from their designers or special tuning, thus higher performance is to be expected in the future.
Indeed, we feel the kind of systems we generate would make good benchmarks for Horn solvers.
If the solver cannot find the invariants on its own, it can be helped by partial invariants from the user.
Also, if it finds a counterexample in the abstraction, we propose a method for reconstructing a concrete counterexample (Sec.~\ref{sec:counterexample}) or triggering a refinement.

\paragraph{Existentials}
Our approach can be used, \emph{a fortiori}, to prove or infer quantifier-free properties, but not existentials. Future work could include quantifier instantiation heuristics for existentials.

\paragraph{Backward analysis}
Our rules are for ``forward analysis'': they express that if configuration is possible at one step during one execution, then some configuration may be possible at the next step during that execution.
We thus define a super-set of all states reachable from program initialization, and the desired property is proved if this set is included in the property.

An alternative approach is ``backward analysis'': find a super-set of the set of all states reachable from a property violation, such that this set has empty intersection with the initial states.
A possible research direction would be to derive backward rules and compare their efficiency to that of forward rules.

\paragraph{Procedures}
One approach to procedures is to consider a call to a procedure as jump to the first node of the callee and a return as a jump back to each possible caller node.
Because this mixes together all calls to the same procedure, it can lose a lot of precision; some tracking variables, abstracting the stack (in the simplest case, the topmost call site), may be added to avoid precision loss.
Such an approach may be immediately combined with ours.

In contrast, some other approaches encode procedures (or other program fragments, such as the loop bodies) as binary input/output relations over the variable state --- or, rather, the fragment of the state that may be read or written by the procedure. This maps well to Horn clauses: in the solution of the solver, the predicate associated to a procedure summarizes its action.
How to combine this vision with our approach is a topic for future research.

\paragraph{High-level maps and sets}
Many programming languages provide libraries for finite maps and (multi)sets. In this article, we have explained how to abstract some, but not all of their features (Sec.~\ref{sec:simple_sets}) --- for instance we do not provide an iterator for non-integer set element types.
Future work should include reviewing their features and common usage in order to design suitable abstractions.

\paragraph{Query-less analysis} One advantage of some of earlier approaches (the abstract interpretation ones from Sec.~\ref{sec:related_absint} and the program transformation from \citet{Monniaux_Alberti_SAS2015}) is that they are capable of inferring what a program does, or at least a meaningful abstraction of it (e.g. ``at the end of this program all cells in the array $a$ contains $42$'')  as opposed to merely proving a property supplied by the user.
Our approach can achieve this as well, \emph{provided it is used with a Horn clause solver that does not require queries} and still provides some interesting solution (a query-less Horn problem has a trivial, uninteresting solution: ``true'' to all predicates).

This Horn clause solver should however be capable of generating disjunctive properties (e.g. $(k < i \land a_k = 0) \lor (k \geq i \land a_k = 42)$); thus a simple approach by abstract interpretation of the Horn clauses in, say, a sub-class of the convex polyhedra, will not do.
We know of no such Horn solver; building one is an interesting research challenge.
Maybe certain partitioning approaches used in sequential program verification \cite{DBLP:journals/toplas/RivalM07,HenryMM12} may be transposed to Horn clauses.

We have expressed an abstraction of the semantics of programs with array reads and writes into a system of Horn clauses on scalar variables.
Another approach would be to directly work from Horn clauses on array variables, and over-approximate the rules and under-approximate the queries into an array-free Horn problem.
\smallskip

\paragraph{Objects} We have considered simple programs operating over arrays or maps, as opposed to a real-life programming language with objects, references or, horror, pointer arithmetic. Yet, our approach can be adapted to such languages.
\lstset{language=Java}
One can indeed see each object field name in a language such as Java (e.g. \lstinline|String x;|) as a map from object references to values (here, of type \lstinline|String|).
The reference may be an index (perhaps $i$ if the object is the $i$-th object allocated) or a more complex record of the site of allocation.

\paragraph{Pointers} Languages with pointers, pointer arithmetic and, worse, access to an object of a type through a pointer of an incompatible type (not uncommon in traditional C programming), can be handled by seeing the memory as an array of bytes, but this leads to impractically inefficient analysis.
It is however often possible to segment the memory into independent variables (never accessed through pointers, or at least accessed only through pointers at known locations) and a number of disjoint arrays.
Our analysis can then be used over these arrays.

\printbibliography

\newpage
\appendix

\section{Horn clause problems}
\lstinputlisting[language=SMT,caption={Array fill 1D},label=Horn:array_fill1]{examples/array_fill1/array_fill1_abstracted.smt2}

\lstinputlisting[language=SMT,caption={Array fill 1D, even-odd},label=Horn:array_fill1_even_odd]{examples/array_fill1_even_odd/array_fill1_even_odd_abstracted.smt2}

\lstinputlisting[language=SMT,caption={Array reverse},label=Horn:array_reverse_once1]{examples/array_reverse_once1/array_reverse_once1_abstracted.smt2}

\lstinputlisting[language=SMT,caption={Real-index maps},label=Horn:real_indexed_maps]{examples/real_indexed_maps.smt2}

\lstinputlisting[language=SMT,caption={Array fill 2D},label=Horn:array_fill2]{examples/array_fill2/array_fill2_abstracted.smt2}

\lstinputlisting[language=SMT,caption={Find minimum},label=Horn:find_minimum]{examples/find_minimum/find_minimum_abstracted1.smt2}

\lstinputlisting[language=SMT,caption={Selection sort: sortedness},label=Horn:selection_sort_sortedness]{examples/selection_sort/selection_sort_abstracted2.smt2}

\lstinputlisting[language=SMT,caption={Selection sort: permutation},label=Horn:selection_sort_multiset]{examples/selection_sort/selection_sort_multiset.smt2}

\end{document}